\documentclass[a4paper,12pt]{article}
\pdfoutput=1 
\usepackage{geometry}
 \geometry{
 a4paper,
 total={160mm,220mm},
 left=25mm,
 top=40mm,
 }
\usepackage[margin=10mm]{caption}

\usepackage[utf8]{inputenc}
\usepackage[english]{babel}
\usepackage{amssymb, amsmath}
\usepackage{amsthm}
\usepackage{amsfonts,bbold}
\usepackage{upgreek}
\usepackage{mathrsfs}
\usepackage{setspace}
\usepackage{graphicx}
\usepackage{subcaption}
\usepackage{slashed}
\usepackage[sorting=none,maxbibnames=99, backend=bibtex]{biblatex}
\addbibresource{Bibliography.bib}
\usepackage{array,booktabs}
\usepackage{lmodern}
\usepackage{tikz}
\usepackage{microtype}
\usepackage{hyperref}
\usepackage{authblk}
\usepackage{appendix}

\usepackage{calc}

\usepackage{accents}

\newtheorem{theorem}{Theorem}
\newtheorem{prop}{Proposition}[section]
\newtheorem{lemma}{Lemma}[section]
\newtheorem{corollary}{Corollary}[lemma]

\theoremstyle{definition}
\newtheorem{remark}{Remark}[section]

\makeatletter
\renewenvironment{proof}[1][\proofname]{%
   \par\pushQED{\qed}\normalfont%
   \topsep6\p@\@plus6\p@\relax
   \trivlist\item[\hskip\labelsep\bfseries#1\@addpunct{.}]%
   \ignorespaces
}{%
   \popQED\endtrivlist\@endpefalse
}

\newcommand{\p}{\partial}
\newcommand{\ups}{\upsilon}
\newcommand{\beq}{\begin{equation}}
\newcommand{\eeq}{\end{equation}}
\newcommand{\eeeem}{\end{multline}}
\newcommand{\bem}{\begin{multline}}
\newcommand{\bqa} {\begin{eqnarray}}
\newcommand{\eqa} {\end{eqnarray}}
\newcommand{\eps}{\varepsilon}

\newcommand{\bmul}{\begin{multline}}
\newcommand{\emul}{\end{multline}}

\DeclareMathOperator{\CT}{CT}

\DeclareMathOperator{\Id}{Id}

\DeclareMathOperator{\Ad}{Ad}
\DeclareMathOperator{\ad}{ad}
\DeclareMathOperator{\dist}{dist}

\DeclareMathOperator{\End}{End}
\DeclareMathOperator{\Aut}{Aut}

% Misc

% mathcal letters

\newcommand{\CA}{{\mathcal A}}
\newcommand{\CB}{{\mathcal B}}

\newcommand{\CG}{{\mathcal G}}
\newcommand{\CH}{{\mathcal H}}
\newcommand{\CI}{{\mathcal I}}

\newcommand{\CR}{{\mathcal R}}

\renewcommand{\CT}{{\mathcal T}}
\newcommand{\CU}{{\mathcal U}}
\newcommand{\CV}{{\mathcal V}}

% mathbb letters

\newcommand{\ZZ}{{\mathbb Z}}
\newcommand{\RR}{{\mathbb R}}

% mathscr letters
\newcommand{\SA}{{\mathscr A}}
\newcommand{\SC}{{\mathscr C}}
\newcommand{\SG}{{\mathscr G}}

\newcommand{\SAl}{{{\mathscr A}_\ell}}
\newcommand{\SAal}{{{\mathscr A}_{a\ell}}}

\newcommand{\SV}{{\mathscr V}}

% mathsf letters
\newcommand{\sfw}{{\mathsf w}}

% bf letters

\newcommand{\Or}{{O(r^{-\infty})}}
\newcommand{\On}{{O(n^{-\infty})}}

\newcommand{\AutAal}{{\Aut_{a\ell}(\SA)}}

\def \l {\left(}
\def \r {\right)}

\def \lal {\langle}
\def \ral {\rangle}

% Greek

\title{An index for two-dimensional SPT states}

\author{Nikita Sopenko \smallskip\\ 
{\it California Institute of Technology, Pasadena, CA 91125, USA}}

\begin{document}

\maketitle

\abstract{We define an index for 2d $G$-invariant invertible states of bosonic lattice systems in the thermodynamic limit for a finite symmetry group $G$ with a unitary action. We show that this index is an invariant of SPT phase.}

\section{Introduction}

In the last few decades it became clear that many important physical properties of a many-body system with a gapped local Hamiltonian are encoded in the entanglement pattern of its ground state, and do not depend on the details of the Hamiltonian \cite{QImeetsQM}. In particular a gapped phase of such system, i.e. equivalence class under small deformations of the Hamiltonian which do not close the gap, can be determined from its ground state. The latter is based on the fact that the ground states of any two such systems in the same phase can be related\footnote{More precisely one should consider stable equivalence \cite{Kitaevlecture} as we define in the main text.} by an evolution by a local Hamiltonian \cite{hastings2005quasiadiabatic}, also known as quasi-adiabatic evolution.

% Importantly, the notion of a gapped phase makes sense only in the thermodynamic limit. If one wants to define invariants of phases and eventually classify them, one is forced to work with quantum states in the infinite volume.\footnote{Many distinctive properties of a gapped phase can be well demonstrated on a Hilbert space of a large but finite system. Though such approach is very insightful, from the theoretical standpoint it is not satisfactory since it does not even allow to give a definition of a phase.} Therefore characterization of such states is of fundamental importance.

A particularly simple class of states was introduced by A. Kitaev \cite{Kitaevlecture} and called invertible. By definition a state is in invertible phase if it can be tensored with some other state so that the resulting state is in the trivial phase, i.e. it can be completely disentangled by a local Hamiltonian evolution. In the presence of a symmetry group $G$ a $G$-equivariant version of such phases can be introduced, known as symmetry protected (SPT) phases \cite{gu2009tensor}. An invertible state is said to be in a non-trivial SPT phase, if it is $G$-invariant and can be inverted only by a Hamiltonian evolution that necessarily breaks $G$ symmetry\footnote{See the main text for a precise definition.}.

While one-dimensional SPT states are well understood by now\footnote{Based on earlier approaches using matrix product state approximation \cite{chen2011classification}, the invariants of phases were defined in a series of works \cite{ogata2019classification,ogata2020mathbb,bourne2020classification}. The completeness of the index for bosonic systems with respect to local Hamiltonian evolution was shown recently in \cite{KSY}.}, in higher dimensions the theory is far from complete despite a large number of examples of SPT states. A powerful approach based on group cohomology was proposed in \cite{chen2013symmetry}. However, soon after it was realized that there are states beyond this cohomology classification. It is believed that all such states are captured by a classification of invertible topological field theories \cite{kapustin2014symmetry,freed2016reflection}. It is desirable to prove this fact.

For 2d bosonic SPT states with a unitary action of a finite group $G$ both approaches predict classification by $H^{3}(G,U(1))$. In this paper we define an index for such systems taking values in $H^{3}(G,U(1))$ using the formalism developed\footnote{For an earlier development see \cite{kitaev2006anyons,kapustin2019thermal}} in \cite{kapustin2020hall} for $G=U(1)$. Our approach is motivated by the work \cite{else2014classifying}. The absence of non-trivial invertible one-dimensional states needed for the definition of the index was proven recently in the work \cite{KSY} to which this paper is complementary. 

This paper is organized as follows. In Section \ref{sec:Preliminaries} we define the setup, introduce the necessary concepts and derive some of their properties. In Section \ref{sec:index} we define the index and check that it gives expected results for states from \cite{chen2013symmetry}. The proof of the lemma needed for the definition is discussed in Appendix \ref{app:1ddefect}.
\\

\noindent
{\bf Note added:}
While this work was being written up, we became aware of
a similar result, obtained independently in \cite{ogata2021h}.

\section{Preliminaries} \label{sec:Preliminaries}

\subsection{Algebra of observables} 

By a $d$-dimensional lattice $\Lambda$ we mean a Delone set in $\RR^d$. The elements of $\Lambda$ are called sites. The distance between points $j$ and $k$ is denoted by $|j-k|$. An embedded $d$-dimensional submanifold of $\RR^d$ whose boundary has a finite number of connected components is called region. By a slight abuse of notation, we will identify a region and its intersection with the lattice $\Lambda$. The complement of a region $\Gamma$ is denoted by $\bar{\Gamma}$.

Let $\SA_{j}=\End(\SV_j)$ be a matrix algebra on a finite dimensional Hilbert space $\SV_j$, and let $\SA_{\Gamma} = \bigotimes_{j\in \Gamma} \SA_{j}$ for a finite subset $\Gamma \subset \Lambda$. A bosonic $d$-dimensional spin system is defined by its algebra of observables $\SA$ which is a $C^*$-algebra defined as a norm completion of the $*$-algebra of the form
\beq  \label{dirlim}
\SAl=\underset{\Gamma} \varinjlim\,  \SA_{\Gamma}
\eeq
The numbers $d_j = \dim \CV_j$ are assumed to be uniformly bounded by $d_j < D$. The elements of $\SA$ are called quasi-local observables, while the elements of $\SA_{\Gamma}$ for finite $\Gamma \subset \Lambda$ are called local observables localized on $\Gamma$. Following \cite{kapustin2020hall} we say that $\CA$ is an almost local observable $f$-localized at a point $p$, if for any $\CB \in \SA_k$ we have
\beq
\|[\CA,\CB]\| \leq 2 \|\CA\| \|\CB\| f(|p-k|)
\eeq
for some monotonically decreasing positive (MDP) function $f(r)=\Or$. Such observables form a dense $*$-sub-algebra $\SAal$ of $\SA$.

We denote the partial (normalized) trace over a region $\bar{B}$ by $\CA|_{B}$ which is defined on $\SAl$ by
\beq
\CA|_{B} := \int \prod_{k\in \bar{B}} d U_k \Ad_{\prod_{k \in \bar{B}} U_k} (\CA)
\eeq
with the integration over all on-site unitaries $U_k \in \SA_k$ with Haar measure, while for $\CA \in \SA$ is defined by $\lim_{n \to \infty} (\CA_{n})|_B$ for a Cauchy sequence $\{\CA_n\} \in \SAl$ with $\CA = \lim_{n \to \infty} \CA_n$.

Any $\CA \in \SAal$ almost local at site $j$ can be decomposed as an infinite sum of local observables $\CA = \sum_{n=1}^{\infty} \CA^{(n)}$ on disks $\Gamma_{n}(j)$ of radius $n$ with the center at $j$, where $\CA^{(n)}$ are defined by 
\beq
\sum_{n=1}^{m} \CA^{(n)} = \CA|_{\Gamma_m(j)}
\eeq
Note that $\| \CA^{(n)}\| \leq g(n)$ for some MDP function $g(r)=\Or$ that depends on $f(r)$ only (see Lemma A.1 of  \cite{kapustin2020hall}).

We say that a symmetry group $G$ acts on $\SA$ if for any site $j$ we have homomorphisms $\CR_j : G \to U(\CV_j)$ to a unitary group of $\CV_j$. We denote the image of $g \in G$ by $\CR^{(g)}_j$. For any (possibly infinite) region $A$ we define automorphisms $w^{(g)}_A \in \Aut(\SA)$ as a conjugation $\Ad_{\CR^{(g)}_A}$ with a (possibly formal) tensor product $\CR^{(g)}_A := \bigotimes_{j \in A} \CR^{(g)}_j$. When $A = \RR^d$ we omit $A$ and simply write $w^{(g)}$. In this paper we always assume that $G$ is a finite group.

For a state $\psi$ on $\SA$ we denote its evaluation on observables by $\lal \CA \ral_{\psi} := \psi(\CA)$. We define the action $\beta(\psi)$ of automorphisms $\beta \in \Aut(\SA)$ on states via $\lal \CA \ral_{\beta(\psi)} = \lal \beta(\CA) \ral_{\psi}$. For a region $A$ we denote the restriction of a state $\psi$ on $A$ by $\psi|_A$. We call a pure state $\psi_0$ factorized, if it satisfies $\lal \CA \CB \ral_{\psi_0} = \lal \CA \ral_{\psi_0} \lal \CB \ral_{\psi_0}$ for any observables $\CA$ and $\CB$ localized on two different sites. Sometimes, if we want to specify the algebra $\SA$ on which a given state $\psi$ is defined, we write $(\SA,\psi)$.

We say that an automorphism $\beta \in \Aut(\SA) $ is almost local, if it is also an automorphism of $\SAal$. We denote the group of such automorphisms by $\AutAal$.

\subsection{Chains}

The Hamiltonians of infinite systems are not elements of the algebra of observables. We can only define them by a formal sum
\beq
F = \sum_{j \in \Lambda} F_j
\eeq
of observables $F_j$ somehow localized at site $j$. We formalize this vague statement by requiring that $F_j$ is a self-adjoint almost local observable $f$-localized at $j$ with uniformly bounded $\|F_j\|$ (see Remark \ref{rmk:decay}). Following the terminology of \cite{kapustin2020hall} we call such objects 0-chains. We denote the group of such 0-chains under on-site addition by $\SC^0$. On $\SAal$ it defines a derivation $\ad_F(\CA) = \sum_{j \in \Lambda} [F_j,\CA] \in \SAal$. 

For a region $A$, by $F_A$ we denote a 0-chain $(F_A)_j := (\delta_{j \in A}) F_j$, where $\delta_{j \in A}$ is 1 when $j \in A$ and 0 otherwise. By $F|_{A}$ we denote a 0-chain $(F|_A)_j: = (F_j)|_{A}$.

We say that $F$ is approximately localized on a submanifold $A$, if $\|F_j\| \leq g(\dist(j,A))$ for some MDP function $g(r)=\Or$. We denote the group of such chains under on-site addition by $\SC^0_A$.

\subsection{Locally generated paths of automorphisms}

Similarly, a unitary operator that implements an evolution by a Hamiltonian does not exist as an element of the algebra of observable for infinite systems. Instead, we can only define automorphisms of the algebra of observables generated by some Hamiltonian.

Let $F(s)$ be a 0-chain, that depends on $s\in[0,1]$. We can define an automorphism $\alpha_F(s) \in \Aut(\SA)$ by
\beq
- i \frac{d}{ds} \alpha_F(s)(\CA) =\alpha_F(s)( \ad_F(\CA)), \,\,\,\,\,\,\,\,\,\,\,\,\,\alpha_F(0)=\Id
\eeq
for $\CA \in \SA_{al}$ and by extending it to the whole algebra $\SA$ (see e.g. \cite{bratteli2012operator2}). Note that $\alpha_F(s) \in \AutAal$ (see Lemma A.2 of \cite{kapustin2020hall}). We call a pair $(\alpha_F(s), F(s))$ {\it locally generated path} (LGP), and if an automorphism $\beta \in \Aut(\SA)$ coincides with $\alpha_F(1)$ for some 0-chain $F(s)$, we say that it is {\it locally generated automorphism} (LGA).

In the following by a slight abuse of notation we use $\alpha_F$ both for the pair $(\alpha_F(s),F(s))$ and for the automorphism $\alpha_{F}(1)$. We also omit $s$ in $F(s)$ and simply write $F$.

\begin{remark} \label{rmk:decay}
One may wonder why we define LGP and LGA with this particular decay for the generating Hamiltonian. The reason is twofold. First, for exact quasi-adiabatic evolution between ground states of gapped local Hamiltonians the superpolynomial decay is needed \cite{osborne2007simulating,hastings2010quasi}. Second, we want the decay to be independent of the dimension of the system, so that the classification of defect states (defined below) depends only on the dimension of the defect, not on the underlying space.
\end{remark}

LGPs form a group. Indeed, the associative composition\footnote{We use symbol $\circ$ both for composition of LGPs as defined in the main text and for composition of automorphisms of $\SA$.} $\alpha_F \circ \alpha_G$ of two LGPs $\alpha_{F}$ and $\alpha_{G}$ is an LGP generated by $G(s) + (\alpha_G(s))^{-1}(F(s))$, while the inverse $(\alpha_F)^{-1}$ of $\alpha_F$ is generated by $-\alpha_{F}(s)(F(s))$. The unit element is given by $\alpha_0$ which is a canonical LGP for the identity automorphism $\Id$. We denote this group by  $\SG(\SA)$ or simply $\SG$ if the choice of the algebra is clear from the context.

We say that $\alpha_F$ is approximately localized on a submanifold $A$, if $F \in \SC^0_A$. If $\alpha_F$ and $\alpha_G$ are approximately localized on $A$, then their composition is also approximately localized on $A$. The same is true for the inverse $(\alpha_F)^{-1}$. That allows us to define a group $\SG_A$ of LGPs approximately localized on $A$, which is a subgroup of $\SG$.

If $\alpha_F \in \SG_{p}$ for a point $p$, then $P(s) = \sum_j F_j(s)$ is an almost local observable. In this case there is a canonical unitary observable $\CV$ defined by $\CV = \CT_s \{e^{i \int_0^1 ds P(s)}\}$, where $\CT_s \{...\}$ is an ordered exponential with respect to $s$, such that $\alpha_F(1) = \Ad_{\CV}$. We say that $\CV$ is the corresponding unitary observable for $\alpha_F$. Note that if $\CV_{1}$ and $\CV_{2}$ are the corresponding unitary observables for $\alpha_{F_1},\alpha_{F_2} \in \SG_{p}$, then $(\CV_1 \CV_2)$ is the corresponding unitary observable for $\alpha_{F_1} \circ \alpha_{F_2}$. Also note that if $\CV$ is the corresponding observable for $\alpha_F \in \SG_p$, then $\alpha_{G}(\CV)$ is the corresponding observable for $\alpha_G \circ \alpha_F \circ (\alpha_G)^{-1} \in \SG_p$ (see Lemma \ref{lma:fagafa} below).

A few elementary lemmas below will be useful for us later.

\begin{lemma} \label{lma:fafa}
If $G \in \SC^0_A$, then $\alpha_{F+G} \circ (\alpha_{F})^{-1} \in \SG_A$.
\end{lemma}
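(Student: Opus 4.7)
The plan is to exhibit an explicit 0-chain $H$ generating $\alpha_{F+G} \circ (\alpha_F)^{-1}$ and then check directly that $H \in \SC^0_A$. The existence of such $H$ is immediate from the group structure on $\SG$ summarized in the paragraph preceding the lemma, so no further input about LGPs is needed beyond the inversion and composition formulas already stated there.

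First I would apply those formulas: $(\alpha_F)^{-1}$ is generated by the 0-chain $-\alpha_F(s)(F(s))$, and the composition $\alpha_P \circ \alpha_Q$ is generated by $Q(s) + (\alpha_Q(s))^{-1}(P(s))$. Taking $P = F+G$ and $Q(s) = -\alpha_F(s)(F(s))$, so that $\alpha_Q = (\alpha_F)^{-1}$, a direct cancellation in $\alpha_F(s)(F(s)+G(s)) - \alpha_F(s)(F(s))$ leaves the generator with on-site pieces
\beq
H_j(s) := \alpha_F(s)(G_j(s)).
\eeq
Self-adjointness of each $H_j(s)$ is automatic because $\alpha_F(s)$ is a $*$-automorphism, and the isometric property of automorphisms gives the norm bound $\|H_j(s)\| = \|G_j(s)\| \leq g(\dist(j,A))$, which simultaneously witnesses approximate localization on $A$ and delivers the uniform boundedness in $j$ required for a 0-chain.

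The main obstacle is to argue that each $H_j(s)$ is almost local at $j$ with a single MDP localization function valid uniformly in $j$ and $s$. Here I would invoke the standard Lieb-Robinson-type statement for LGAs (in the form of Lemma A.2 of \cite{kapustin2020hall}) that $\alpha_F(s)$ maps any observable $f$-localized at $j$ to one almost local at $j$ with MDP decay $f'$ depending only on $f$ and on the decay data of the generating chain $F$, not on the site $j$. Applying this to the $G_j(s)$, which by the definition of a 0-chain share a common localization function, produces a common MDP function for the $H_j(s)$. Assembling the three conditions then gives $H \in \SC^0_A$ and hence $\alpha_{F+G} \circ (\alpha_F)^{-1} = \alpha_H \in \SG_A$, as required.
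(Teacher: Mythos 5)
Your proposal is correct and is essentially the paper's argument: the paper factors $\alpha_{F+G} = \alpha_{G'} \circ \alpha_F$ with $G'(s) = \alpha_F(s)(G(s))$, which is exactly the generator $H$ you obtain by composing $\alpha_{F+G}$ with the inverse LGP directly. Your verification that $H \in \SC^0_A$ (self-adjointness, norm preservation, and uniform almost locality via Lemma A.2 of \cite{kapustin2020hall}) just spells out what the paper leaves implicit.
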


\begin{proof}
$\alpha_{F+G} = \alpha_{G'} \circ \alpha_{F}$ for $G'(s) = \alpha_F(s) (G(s)) \in \SC^0_A$.
\end{proof}

\begin{lemma} \label{lma:fagafa}
If $G \in \SC^0_A$, then $\alpha_{F} \circ \alpha_G \circ (\alpha_{F})^{-1} \in \SG_A$.
\end{lemma}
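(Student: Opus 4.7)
The plan is to reduce the claim to Lemma \ref{lma:fafa} by producing an explicit 0-chain whose associated LGP equals $\alpha_F\circ\alpha_G\circ(\alpha_F)^{-1}$, and then splitting that 0-chain as $F$ plus an $A$-localized remainder.

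I would start from the composition formula stated just before Lemma \ref{lma:fafa}: $\alpha_F\circ\alpha_G$ is an LGP generated by $\alpha_G^{-1}(F)+G$. Writing this generator as $F+K$ with
\[
K(s)\;:=\;\bigl(\alpha_G^{-1}(s)(F(s))-F(s)\bigr)\;+\;G(s),
\]
one gets $\alpha_F\circ\alpha_G\circ(\alpha_F)^{-1}=\alpha_{F+K}\circ(\alpha_F)^{-1}$, which by Lemma \ref{lma:fafa} lies in $\SG_A$ as soon as $K\in\SC^0_A$. Since $G\in\SC^0_A$ by hypothesis, everything reduces to verifying that $\alpha_G^{-1}(F)-F\in\SC^0_A$.

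For the latter, the heuristic is that $G$ being concentrated near $A$ forces $\alpha_G^{-1}$ to act approximately trivially on observables far from $A$. Quantitatively, integrating the defining ODE for $\alpha_G^{-1}$ gives
\[
\alpha_G^{-1}(F_j)-F_j \;=\; -i\int_0^1 [G(s),\alpha_G^{-1}(s)(F_j)]\,ds,
\]
so that $\|\alpha_G^{-1}(F_j)-F_j\|\le\int_0^1\sum_k\|[G_k(s),\alpha_G^{-1}(s)(F_j)]\|\,ds$. Each commutator is controlled by the almost-locality of $\alpha_G^{-1}(s)(F_j)$ at $j$ (with a decay function $\tilde f$ independent of $j$, by Lemma A.2 of \cite{kapustin2020hall}) together with $\|G_k(s)\|\le g(\dist(k,A))$. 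A standard two-scale split — for $r:=\dist(j,A)$, in any term either $|j-k|\ge r/2$ or $\dist(k,A)\ge r/2$ — yields a sum bounded by $\tilde f(r/2)\cdot C_1+g(r/2)\cdot C_2$, where $C_1,C_2$ are finite by the superpolynomial decay of $g,\tilde f$ and the polynomial growth of lattice balls. This is MDP in $r$, so $\alpha_G^{-1}(F_j)-F_j$ assembles into a 0-chain in $\SC^0_A$.

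The main obstacle I anticipate is the bookkeeping in this last step: to legitimately call $\alpha_G^{-1}(F)-F$ a 0-chain in $\SC^0_A$ one must verify, uniformly in $j$, both the almost-locality of $\alpha_G^{-1}(F_j)-F_j$ at $j$ with a site-independent decay function and the MDP norm decay in $\dist(j,A)$. Both facts should be direct consequences of the uniform Lieb-Robinson-type estimates already used to establish $\alpha_G\in\AutAal$, but writing them out carefully — especially the $j$-uniformity of the almost-locality function of $\alpha_G^{-1}(s)(F_j)$ — is where the technical work lies.
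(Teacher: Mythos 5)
Your proof is correct and follows essentially the same route as the paper: the paper likewise writes $\alpha_F\circ\alpha_G=\alpha_{F+G'}$ with $G'=G+\bigl((\alpha_G(s))^{-1}(F(s))-F(s)\bigr)$, asserts $G'\in\SC^0_A$, and invokes Lemma \ref{lma:fafa}. The only difference is that the paper takes the membership $(\alpha_G(s))^{-1}(F(s))-F(s)\in\SC^0_A$ for granted, whereas you sketch the (correct, standard) locality estimate justifying it.
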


\begin{proof}
Note that $\alpha_F \circ \alpha_G = \alpha_{F+G'}$ with $G'=G+((\alpha_G(s))^{-1}(F(s))-F(s)) \in \SC^0_A$. By Lemma \ref{lma:fafa} we have $\alpha_{F+G'} \circ (\alpha_{F})^{-1} \in \SG_{A}$.
\end{proof}

\begin{lemma} \label{lma:decomp}
For any given region $A$, any $\alpha_F \in \SG$ can be decomposed as
\beq
\alpha_F = \alpha_{F^{(0)}} \circ \alpha_{F|_A} \circ \alpha_{F|_{\bar{A}}}
\eeq
for some $\alpha_{F^{(0)}} \in \SG_{\p A}$.
\end{lemma}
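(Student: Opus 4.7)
The plan is to establish the slightly stronger statement that
$\alpha_F \circ \bigl(\alpha_{F|_A}\circ\alpha_{F|_{\bar A}}\bigr)^{-1}$ lies in $\SG_{\p A}$, which immediately yields the claimed factorisation with $\alpha_{F^{(0)}}$ equal to this composition. Using the composition rule for LGPs stated before Lemma~\ref{lma:fafa}, the product $\alpha_{F|_A}\circ\alpha_{F|_{\bar A}}$ is the LGP generated by the 0-chain $H(s) := F(s)|_{\bar A} + \alpha_{F|_{\bar A}}(s)^{-1}\bigl(F(s)|_A\bigr)$. We compare $F$ and $H$ to their common diagonal part $F|_A + F|_{\bar A}$ by writing $F = (F|_A+F|_{\bar A}) + R$ and $H = (F|_A+F|_{\bar A}) + \Delta$, with
\[R := F - F|_A - F|_{\bar A}, \qquad \Delta := \alpha_{F|_{\bar A}}^{-1}(F|_A) - F|_A.\]
If both $R$ and $\Delta$ lie in $\SC^0_{\p A}$, two applications of Lemma~\ref{lma:fafa} will combine to give $\alpha_F \circ \alpha_H^{-1} \in \SG_{\p A}$, as needed.

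For $R$ we may assume without loss of generality that each component $F_j(s)$ is traceless, since shifting by a scalar changes neither $\ad_F$ nor $\alpha_F$. Under this assumption, for a site $j \in A$ with $r := \dist(j,\p A)$, the disk $B := \Gamma_r(j)$ lies inside $A$, and the almost-local decomposition of Subsection~2.1 gives $\|F_j - F_j|_B\| = \Or$ uniformly in $s$. Because $F_j|_B \in \SA_B \subset \SA_A$ is traceless, its image under the partial trace $\,\cdot\,|_A$ equals itself while its image under $\,\cdot\,|_{\bar A}$ vanishes; using also that the normalized partial trace is a contraction we get $\|F_j|_A - F_j|_B\| = \Or$ and $\|F_j|_{\bar A}\| = \Or$, whence $\|R_j\| = \Or$. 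A symmetric bound holds for $j \in \bar A$, so $R \in \SC^0_{\p A}$.

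The estimate for $\Delta$ is of Lieb--Robinson type: the chain $F|_{\bar A}$ is approximately localized on $\bar A$ while $F_j|_A$ is, up to an $\Or$ error, localized in the disk $B$ at distance $r$ from $\bar A$, so $\alpha_{F|_{\bar A}}(s)^{-1}$ should move $F_j|_A$ by an amount decaying superpolynomially in $r$. Concretely, integrating
\[\tfrac{d}{ds}\alpha_{F|_{\bar A}}(s)^{-1}\bigl(F_j|_A\bigr) = -\alpha_{F|_{\bar A}}(s)^{-1}\bigl(\ad_{F(s)|_{\bar A}}(F_j|_A)\bigr)\]
over $s\in[0,1]$ and controlling the commutator using the almost-local decay of $(F|_{\bar A})_k$ at sites $k \in \bar A$ with $|k-j|\ge r$ (essentially Lemma~A.2 of \cite{kapustin2020hall}) gives $\|\Delta_j\| = \Or$. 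An analogous estimate for $j \in \bar A$ yields $\Delta \in \SC^0_{\p A}$.

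The main obstacle is making this Lieb--Robinson bound on $\Delta$ uniform in $j$: the iterated commutator estimate must produce superpolynomial decay with constants independent of the site, so that the whole collection $\{\Delta_j\}_{j\in\Lambda}$ is controlled by a single MDP function of $\dist(j,\p A)$. Once both $R$ and $\Delta$ belong to $\SC^0_{\p A}$, Lemma~\ref{lma:fafa} applied to the pairs $(F|_A+F|_{\bar A},R)$ and $(F|_A+F|_{\bar A},\Delta)$ produces $\alpha_F \circ (\alpha_{F|_A}\circ\alpha_{F|_{\bar A}})^{-1} = \alpha_{F^{(0)}}$ with $F^{(0)} \in \SC^0_{\p A}$, which is the required decomposition.
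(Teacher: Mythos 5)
Your argument is correct and follows essentially the same route as the paper: reduce everything to showing that the relevant difference of generating 0-chains lies in $\SC^0_{\p A}$ and then invoke Lemma \ref{lma:fafa}. You are in fact more careful than the paper's one-line proof, which only bounds $R=F-F|_A-F|_{\bar A}$ and tacitly identifies $\alpha_{F|_A}\circ\alpha_{F|_{\bar A}}$ with $\alpha_{F|_A+F|_{\bar A}}$; your tracelessness normalization (needed for $\|F_j|_{\bar A}\|$ to be small at all) and the Lieb--Robinson control of $\Delta=\alpha_{F|_{\bar A}}^{-1}(F|_A)-F|_A$ are precisely what closes that gap, and the uniform-in-$j$ estimate you flag as remaining is of the same standard type already used implicitly in the proof of Lemma \ref{lma:fafa} (where $\alpha_F(s)(G(s))\in\SC^0_A$ is asserted).
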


\begin{proof}
Let $G = - (F-F|_A- F|_{\bar{A}})$. For any $j \in A$ we have $\| F_j - F_j|_A - F_j|_{\bar A} \| \leq \| F_j - F_j|_A\| + \|F_j|_{\bar A} \| \leq g(\dist(j,\p A))$ for some MDP function $g(r) = \Or$, and similarly for $j \in \bar{A}$. Therefore $G \in \SC^0_{\p A}$, and Lemma \ref{lma:fafa} implies the desired result.
\end{proof}

\begin{lemma} \label{lma:splitting}
If $\alpha_F \in \SG_{A_- \cup A_+}$ for $A_{+}$ and $A_{-}$ lying at non-intersecting cones with the same origin $p$, then we can split $\alpha_F = \alpha_{F_-} \circ \alpha_{F_+}$, such that $\alpha_{F_{\pm}} \in \SG_{A_{\pm}}$.
\end{lemma}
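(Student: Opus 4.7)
The idea is to split $F$ along an intermediate cone $C$ with apex $p$ that strictly separates $A_+$ from $A_-$, and then correct for the non-commutativity of the two pieces by conjugation. Concretely, I would set $F_+ := F_C$, define $\alpha_{F_-} := \alpha_F \circ (\alpha_{F_+})^{-1}$, and verify that $F_+ \in \SC^0_{A_+}$ and that the canonical generator of $\alpha_{F_-}$ lies in $\SC^0_{A_-}$.

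The first step is a geometric lemma. Since the cones in which $A_\pm$ lie are non-intersecting and share the apex $p$, one can choose an intermediate cone $C$ with apex $p$ and a constant $c \in (0,1]$ such that $\dist(j, A_-) \geq c\,|j - p|$ for every $j \in C$ and $\dist(j, A_+) \geq c\,|j-p|$ for every $j \in \bar C$. The hypothesis $F \in \SC^0_{A_- \cup A_+}$ provides $\|F_j\| \leq g(\dist(j, A_- \cup A_+))$ for some MDP $g = \Or$. For $j \in C$ there are two cases: either $\dist(j, A_+) \leq \dist(j, A_-)$, in which case $\|F_j\| \leq g(\dist(j, A_+))$ directly, or $\dist(j, A_-) < \dist(j, A_+)$, in which case $\dist(j, A_-) \geq c\,|j-p| \geq c\,\dist(j, A_+)$ using $\dist(j, A_+) \leq |j-p|$ (valid because $p$, as the common apex, lies in $A_+$). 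Either way $\|F_j\| \leq g(c\,\dist(j, A_+))$, which is MDP in $\dist(j, A_+)$. The $f$-localization of each $(F_+)_j$ at $j$ is inherited from $F_j$, so $F_+ \in \SC^0_{A_+}$ and hence $\alpha_{F_+} \in \SG_{A_+}$.

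Next I would use the group law for LGPs to compute that $\alpha_F \circ (\alpha_{F_+})^{-1}$ is an LGA with canonical generator $F_-(s) = \alpha_{F_+}(s)\bigl(F_{\bar C}(s)\bigr)$, which gives the desired decomposition $\alpha_F = \alpha_{F_-} \circ \alpha_{F_+}$. Because $\alpha_{F_+}(s)$ is an isometric automorphism, $\|(F_-)_j\| = \|F_j\|$ for $j \in \bar C$ and $0$ otherwise, so the symmetric geometric argument yields $\|(F_-)_j\| \leq g(c\,\dist(j, A_-))$. The step I expect to be the main obstacle is showing that each $(F_-)_j$ is still almost local at $j$ with $f'$-localization \emph{uniform in $j$}: this is a Lieb--Robinson type statement asserting that applying an LGA preserves almost locality with quantitative control depending only on the generating $0$-chain, and it is covered by Lemma A.2 of \cite{kapustin2020hall}. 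Granting that estimate, $F_- \in \SC^0_{A_-}$ and the splitting is complete.
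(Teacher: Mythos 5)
Your proof is correct and follows essentially the same route as the paper: the paper's one-line proof invokes Lemma \ref{lma:decomp} applied to an intermediate cone separating $A_+$ from $A_-$, which is exactly your construction, except that you split $F$ via the indicator restriction $F_C$ and apply the group law (Lemma \ref{lma:fafa}) directly, rather than via the partial-trace decomposition $F|_C$, $F|_{\bar C}$ of Lemma \ref{lma:decomp}. Your write-up additionally spells out the geometric separation estimate and the reliance on Lemma A.2 of \cite{kapustin2020hall} for the almost locality of $\alpha_{F_+}(s)(F_{\bar C}(s))$, both of which the paper leaves implicit.
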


\begin{proof}
That is a direct consequence of the Lemma \ref{lma:decomp}
\end{proof}

\begin{lemma} \label{lma:intersection}
If $\alpha_{F_-} \circ \alpha_{F_+} = \alpha_0$ for $\alpha_{F_{\pm}} \in \SG_{A_{\pm}}$ with $A_{+}$ and $A_{-}$ lying at non-intersecting cones with the same origin $p$, then $\alpha_{F_{\pm}}  \in \SG_{p}$.
\end{lemma}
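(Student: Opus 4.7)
The plan is to exploit the composition formula for LGPs to turn the hypothesis $\alpha_{F_-}\circ\alpha_{F_+}=\alpha_0$ into a pointwise identity on the generating 0-chains, then to combine the two localization hypotheses with an elementary geometric observation about the cones.

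First I would apply the composition rule recorded earlier: $\alpha_{F_-}\circ\alpha_{F_+}$ is the LGP generated by $F_+(s)+(\alpha_{F_+}(s))^{-1}(F_-(s))$. Since $\alpha_0$ has the zero chain as its generator, the hypothesis forces, site by site, $(F_+)_j(s)=-\alpha_{F_+}(s)^{-1}((F_-)_j(s))$ for every $s\in[0,1]$. Because $\alpha_{F_+}(s)^{-1}\in\AutAal$ is norm-preserving, this yields the key identity $\|(F_+)_j(s)\|=\|(F_-)_j(s)\|$. Combined with $F_\pm\in\SC^0_{A_\pm}$, I would have simultaneously $\|(F_+)_j\|\leq g_+(\dist(j,A_+))$ and $\|(F_+)_j\|=\|(F_-)_j\|\leq g_-(\dist(j,A_-))$ for MDP functions $g_\pm=\Or$, and symmetrically for $(F_-)_j$.

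The remaining step I would handle geometrically. Because $A_+$ and $A_-$ are cones sharing only the apex $p$ and separated by a positive angular gap $\theta_0>0$, there exists $c>0$ such that for every site $j$ at least one of $\dist(j,A_+)\geq c|j-p|$ or $\dist(j,A_-)\geq c|j-p|$ holds; this follows from the elementary cone bound $|x_+-x_-|\geq c'\max(|x_+-p|,|x_--p|)$ for any $x_\pm\in A_\pm$ (with $c'$ depending only on $\theta_0$), applied to the nearest points of the two cones to $j$. Feeding this into the double bound gives $\|(F_\pm)_j\|\leq\max(g_+,g_-)(c|j-p|)$, an MDP function of $|j-p|$ of $\Or$ decay, so $F_\pm\in\SC^0_p$ and hence $\alpha_{F_\pm}\in\SG_p$. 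I do not anticipate a substantial obstacle; the main care needed is to read the composition formula as an equality of 0-chains, interpreting $(\alpha^{-1}(F))_j$ as $\alpha^{-1}(F_j)$ in accordance with the conventions already established.
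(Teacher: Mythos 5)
Your proposal is correct and follows essentially the same route as the paper: the paper likewise reads $\alpha_{F_-}\circ\alpha_{F_+}=\alpha_0$ through the composition formula as the vanishing of the chain $F_+(s)+(\alpha_{F_+}(s))^{-1}(F_-(s))$ and then invokes the observation that if $F_-+F_+=0$ with $F_\pm\in\SC^0_{A_\pm}$ for disjoint cones with common apex $p$, then $F_\pm\in\SC^0_p$. Your write-up merely makes explicit the norm-preservation of the automorphism and the angular-gap estimate that the paper leaves implicit.
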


\begin{proof}
If $F_-+F_+=0$ for $F_{\pm} \in \SC^0_{A_{\pm}}$, then $F_{\pm} \in \SC^0_p$. Since $\alpha_{F_-} \circ \alpha_{F_+}$ is generated by $F_{+}(s) + (\alpha_{F_+}(s))^{-1}(F_-(s))$ satisfying this condition, we get the statement of the lemma.
\end{proof}

\subsection{States and phases}

There is a natural operation on states known as stacking. Given two state $(\SA_1,\psi_1)$ and $(\SA_2,\psi_2)$ we can construct a new state $(\SA_1 \otimes \SA_2, \psi_1 \otimes \psi_2)$. In the following we denote it by $\psi_1 \otimes \psi_2$ since it is clear that the resulting state is defined on the tensor product of the corresponding quasi-local algebras. 

We say that two states $\psi_1$ and $\psi_2$ on $\SA$ are LGA-equivalent, if there is $\alpha_F \in \SG$, such that $\psi_1 = \alpha_{F}(\psi_2)$. We call states which are LGA-equivalent to a factorized state {\it short-range entangled} (SRE). We say that two states $(\psi_1,\SA_1)$ and $(\psi_2,\SA_2)$ are stably LGA-equivalent, if there are factorized states $(\SA_1',\psi_1')$ and $(\SA_2',\psi_2')$ such that the algebras $\SA_1 \otimes \SA_1'$ and $\SA_2 \otimes \SA_2'$ are isomorphic (i.e. the lattices $\Lambda_1 \cup \Lambda'_1$ and $\Lambda_2 \cup \Lambda'_2$ are the same Delone sets and all $d_j$ are the same), and $\psi_1 \otimes \psi'_1$ and $\psi_2 \otimes \psi'_2$ are LGA-equivalent. It defines an equivalence class on states called {\it phase}. We call states which are stably LGA-equivalent to a factorized state {\it stably short-range entangled} (stably SRE).

Stacking induces a commutative monoid structure on phases with an identity being the trivial phase $\uptau$ (the phase of a factorized state). We denote this monoid by $(\Phi,\bullet,\uptau)$. Invertible elements of $\Phi$ form an abelian group $\Phi^*$ and are called {\it invertible phases}. States $\psi$ which are representatives of such phases $[\psi] \in \Phi^*$ are called {\it invertible}.

In the presence of a symmetry group $G$ we can consider a class of $G$-invariant states and define the same notions using $G$-equivariant LGA equivalence and $G$-invariant factorized state. The latter is defined to be a factorized state $\psi_0$ with $G$-invariant vectors $|v_j\ral \in \CV_j$, such that $\lal \CA_j \ral_{\psi} = \lal v_j|\CA_j|v_j \ral$ for any $\CA_j \in \SA_j$ (see \cite{KSY} for more details). Similarly, we have a monoid $(\Phi_G, \bullet, \uptau_{G})$ of $G$-invariant phases with abelian group $\Phi^*_G$ of $G$-invertible phases. We call a $G$-invariant phase {\it symmetry protected} (SPT) if is mapped to $\Phi^*$ under a forgetful map $\Phi_G \to \Phi$. Representatives of SPT phases are called {\it SPT states}.

\subsection{Defect states}

We call a state $\psi$ {\it SRE defect state} on a submanifold $A$ for a pure factorized state $\psi_0$, if it is LGA-equivalent to $\psi_0$ via $\alpha_F \in \SG_A$. 

We call a state $\psi$ {\it invertible defect state} on a submanifold $A$, if there is another system $(\SA',\psi_0')$ with a pure factorized state $\psi'_0$ and a defect state $\psi'$ on $A$ for $\psi'_0$, such that $\psi \otimes \psi'$ is an SRE defect state for $\psi_0 \otimes \psi'_0$. 

\begin{remark}
If a lattice $\Lambda$ in $\RR^n$ is a sublattice of some lattice $\tilde{\Lambda}$ in $\RR^m$ for $n<m$, then SRE and invertible states on $\Lambda$ naturally give SRE and invertible defect states on $\RR^n \subset \RR^m$.
\end{remark}

A natural way to produce an invertible defect state on a $(d-1)$-dimensional boundary $\p A$ of a region $A$ for a factorized state $\psi_0$ is to take some $\alpha_F \in \SG$, such that $\alpha_F(\psi_0) = \psi_0$, and consider $\psi = \alpha_{F_A}(\psi_0)$.

\begin{lemma} \label{lma:aFAdefect}
All such states are invertible defect states.
\end{lemma}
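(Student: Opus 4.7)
\medskip

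\noindent\emph{Plan.} I would take $(\SA',\psi_0') := (\SA,\psi_0)$ and propose $\psi' := \alpha_{F_A}^{-1}(\psi_0)$ as the candidate inverse defect state on a second copy of $\SA$. What has to be verified is (i) that $\psi'$ is itself a defect state on $\partial A$ for $\psi_0$, and (ii) that $\psi\otimes\psi' = (\alpha_{F_A}\otimes\alpha_{F_A}^{-1})(\psi_0\otimes\psi_0)$ is an SRE defect state on $\partial A$ for $\psi_0\otimes\psi_0$ in the doubled system $\SA\otimes\SA$.

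The central strategy for (ii) is to present the same state as an SRE defect on each of the two closed complementary regions $A\cup\partial A$ and $\bar A\cup\partial A$, and then conclude by intersection that it is in fact an SRE defect on $\partial A$. Trivially, $\alpha_{F_A}\otimes\alpha_{F_A}^{-1}\in\SG_A(\SA\otimes\SA)\subset\SG_{A\cup\partial A}(\SA\otimes\SA)$, which gives the first presentation. For the complementary one, I use that the stabilizer $\alpha_F\otimes\alpha_F^{-1}$ preserves $\psi_0\otimes\psi_0$ and compose
\[
(\alpha_{F_A}\otimes\alpha_{F_A}^{-1})\circ(\alpha_F^{-1}\otimes\alpha_F) = (\alpha_{F_A}\alpha_F^{-1})\otimes(\alpha_{F_A}^{-1}\alpha_F).
\]
The identity $F = F_A + F_{\bar A}$ together with Lemma~\ref{lma:fafa} yields a decomposition $\alpha_F = \alpha_{F^{(0)}}\circ\alpha_{F_A}\circ\alpha_{F_{\bar A}}$ with $\alpha_{F^{(0)}}\in\SG_{\partial A}$, analogous to Lemma~\ref{lma:decomp} but with the support-restricted chains in place of the partial traces. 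Substituting this decomposition and invoking Lemma~\ref{lma:fagafa} to handle the ensuing conjugations of $\alpha_{F_{\bar A}}^{\pm 1}$ by $\alpha_{F_A}^{\pm 1}$, each of $\alpha_{F_A}\alpha_F^{-1}$ and $\alpha_{F_A}^{-1}\alpha_F$ lies in $\SG_{\bar A\cup\partial A}(\SA)$; hence the composition above lies in $\SG_{\bar A\cup\partial A}(\SA\otimes\SA)$, giving the second presentation. Claim (i) follows from the analogous argument applied to $\psi'$ on a single copy, noting that $\alpha_{F_A}^{-1}$ fits the hypothesis of the lemma with the $\psi_0$-preserving LGA $\alpha_F$ replaced by $\alpha_F^{-1}$.

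The main obstacle is the intersection step connecting the two presentations to a $\SG_{\partial A}$-presentation. For $d\geq 2$ the submanifold $\partial A$ is itself $(d-1)$-dimensional rather than a point, so Lemma~\ref{lma:intersection} does not apply directly. Its required $(d-1)$-dimensional analogue says that any two LGA-presentations of the same state, one localized on $A\cup\partial A$ and the other on $\bar A\cup\partial A$, must differ by an LGA strictly localized on $\partial A$ up to stabilizers of the reference factorized state. In the two-dimensional setting of main interest this amounts to the triviality of invertible one-dimensional defect states established in \cite{KSY}; the full technical implementation is deferred to Appendix~\ref{app:1ddefect}.
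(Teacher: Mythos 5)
Your two ``presentations'' of $\psi\otimes\psi'=(\alpha_{F_A}\otimes\alpha_{F_A}^{-1})(\psi_0\otimes\psi_0)$ --- one in $\SG_{A}$, one in $\SG_{\bar A\cup\partial A}$ --- are correct (modulo the unproved but standard estimate that $\alpha_{F_A+F_{\bar A}}$ and $\alpha_{F_A}\circ\alpha_{F_{\bar A}}$ differ by $\SG_{\partial A}$). The proof then stalls exactly where you flag it: there is no ``intersection'' principle turning two presentations on complementary closed regions into one on their common boundary. That principle is not a consequence of the elementary lemmas of Section \ref{sec:Preliminaries}; for a codimension-one $\partial A$ it carries essentially the difficulty of the one-dimensional classification. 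Outsourcing it to the triviality of invertible $1$d defect states is both too strong and too weak. Too strong, because that input (Lemma \ref{lma:1ddefect}, via \cite{KSY}) is available only for $d=2$, while Lemma \ref{lma:aFAdefect} is stated for a general region in any dimension and is deliberately proved by elementary means, independently of the appendix. Too weak, because ``invertible $1$d defect states are SRE'' takes invertibility of the defect as a hypothesis --- which is precisely what is being proved --- so to avoid circularity you would have to recast $\psi\otimes\psi'$ exactly in the form $\alpha_{\tilde Q_A}^{-1}(\psi_0\otimes\psi_0)$ with $\alpha_{\tilde Q}$ preserving $\psi_0\otimes\psi_0$, as required by Lemma \ref{lma:1ddefect}; you do not do this, and even then you would have proved the easy lemma with the hard theorem.

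The paper avoids any intersection step by choosing a different inverse and exploiting that $\psi_0$ is factorized. Take the inverse to be $\alpha_{F|_{\bar A}}(\psi_0')$ on a copy $(\SA',\psi_0')$, not $\alpha_{F_A}^{-1}(\psi_0)$. Since $F|_A$ and $F|_{\bar A}$ act on the disjoint subalgebras $\SA_A$ and $\SA_{\bar A}$, the state $\alpha_{F|_A}(\psi_0)\otimes\alpha_{F|_{\bar A}}(\psi_0')$ restricted to the cross-pairing $\SA_A\otimes\SA'_{\bar A}\cong\SA$ is exactly $\bigl(\alpha_{F|_A}\circ\alpha_{F|_{\bar A}}\bigr)(\psi_0)=\alpha_{F^{(0)}}^{-1}(\psi_0)$ by Lemma \ref{lma:decomp} and $\alpha_F(\psi_0)=\psi_0$, while the complementary pairing $\SA_{\bar A}\otimes\SA'_{A}$ carries a factorized state; Lemma \ref{lma:fafa} then trades $F_A$ for $F|_A$ at the cost of an element of $\SG_{\partial A}$. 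This exhibits $\psi\otimes\alpha_{F|_{\bar A}}(\psi_0')$ as $\SG_{\partial A}$-equivalent to $\psi_0\otimes\psi_0'$ with no intersection lemma and no restriction on the dimension. Your candidate $\alpha_{F_A}^{-1}(\psi_0)$ is in fact a valid inverse --- it is $\SG_{\partial A}$-equivalent to $\alpha_{F|_{\bar A}}(\psi_0)$ because $\alpha_{F_A}^{-1}$ differs from $\alpha_{F_{\bar A}}\circ\alpha_F^{-1}$ by $\SG_{\partial A}$ and $\alpha_F$ preserves $\psi_0$ --- but only after making that identification does the cross-gluing argument go through.
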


\begin{proof}
First, note that $(F_A - F|_A) \in \SC^0_{\p A}$, and therefore by Lemma \ref{lma:fafa} $\psi=\alpha_{F_A}(\psi_0)$ and $\alpha_{F|_A}(\psi_0)$ are LGA equivalent by an element of $\SG_{\p A}$. Second, by Lemma \ref{lma:decomp} the state $\alpha_{F|_A} \circ \alpha_{F|_{\bar{A}}} (\psi_0)$ is LGA equivalent to $\psi_0$ by an element of $\SG_{\p A}$. Let us take a copy $(\SA',\psi'_0)$ of the system $(\SA,\psi_0)$ with a defect state $\alpha_{F|_{\bar{A}}}(\psi'_0)$. Since the subalgebra $\SA_A \otimes \SA'_{\bar{A}}$ of $\SA \otimes \SA'$ is isomorphic to $\SA$, and since the states $\psi_0|_A \otimes \psi'_0|_{\bar{A}}$ and $\alpha_{F|_A}(\psi_0) \otimes \alpha_{F|_{\bar{A}}}(\psi'_0)$ under this isomorphism coincide with $\psi_0$ and $\alpha_{F|_A} \circ \alpha_{F|_{\bar{A}}} (\psi_0)$, respectively, the state $\alpha_{F|_A}(\psi_0) \otimes \alpha_{F|_{\bar{A}}}(\psi'_0)$ is LGA equivalent to $\psi_0 \otimes \psi'_0$ by an element of $\SG_{\p A}(\SA \otimes \SA')$.
\end{proof}
\noindent

\begin{remark}
In fact in this way any invertible state for a lattice $\Lambda \subset \RR^{n}$ can be realized as an invertible defect state for some lattice in $\RR^{n+1}$ on the boundary of a half-plane. For that we can take a lattice that coincides with $\Lambda$ at even values of $x_{n+1}$ and with $\Lambda'$ (the lattice of the inverse system) at odd values of $x_{n+1}$. The automorphism $\alpha_F$ can be constructed using Eilenberg swindle.
\end{remark}

From now on we consider two dimensional lattices $\Lambda \subset \RR^2$ only. We define a cone-like region $A$ with the origin at the point $p$ to be a region which asymptotically behaves as a cone with positive angle (see Fig. \ref{fig:conedef}).

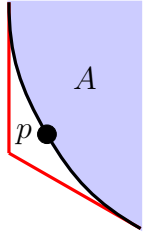
\begin{figure}
\centering
\begin{tikzpicture}[scale=.5]

\draw [color=blue!20, fill=blue!20, very thick] (1,0.5) to [out=120,in=-90] (0,4) -- (3.4641,4) -- (3.4641,-2) to [out=150,in=-60] (1,0.5);

\draw[red, very thick] (0,0) -- (3.4641,-2);
\draw[red, very thick] (0,0) -- (0,4);

\filldraw[black] (1,0.5) circle (7pt) node[anchor=west]{};

\node  at (.4,.5) {$p$}; 
\node  at (2,2) {$A$};

\coordinate (G) at (1,0.5);
\coordinate (R) at (0,4);
\coordinate (B) at (3.4641,-2);
\draw[black, very thick] (G) to [out=120,in=-90] (R);
\draw[black, very thick] (G) to [out=-60,in=150] (B);

\end{tikzpicture}
\caption{Definition of a cone-like region. 
}
\label{fig:conedef}
\end{figure}

We say that two states  $\psi_1$ and $\psi_2$ on $C$ are $f$-close far from $j$ for some MDP function $f(r) = \Or$, if for any observable $\CA \in \SA_{A \cup \bar{\Gamma}_r(j)}$ with $\Gamma_r(j)$ being a disk of radius $r$ with the center at $j$ we have $\|\psi_1(\CA) - \psi_2 (\CA)\| \leq \|\CA\| f(r)$.

Using the results of \cite{KSY} we have the following lemma, which we prove in the appendix.

\begin{lemma}\label{lma:1ddefect}
Let $\psi_0$ be a 2d pure factorized state, and let $\alpha_Q \in \SG$, such that $\alpha_Q(\psi_0) = \psi_0$. Then for any cone-like region $A$ the state $\alpha_{Q_{A}}^{-1}(\psi_0)$ is SRE defect state.
\end{lemma}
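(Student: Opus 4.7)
The plan is two-step: first, by manipulations analogous to the proof of Lemma \ref{lma:aFAdefect}, I will establish that $\alpha_{Q_A}^{-1}(\psi_0)$ is an \emph{invertible} defect state on $\partial A$; then, since $A$ is cone-like in $\RR^2$ and so $\partial A$ is effectively one-dimensional, I will invoke the main triviality result of \cite{KSY} for one-dimensional invertible bosonic phases to upgrade the defect from invertible to SRE.

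For the first step, I would apply Lemma \ref{lma:decomp} to write $\alpha_Q = \alpha_{Q^{(0)}} \circ \alpha_{Q|_A} \circ \alpha_{Q|_{\bar A}}$ with $\alpha_{Q^{(0)}} \in \SG_{\partial A}$. Evaluating at $\psi_0$ and using $\alpha_Q(\psi_0) = \psi_0$ gives $\alpha_{Q|_A}^{-1}\alpha_{Q^{(0)}}^{-1}(\psi_0) = \alpha_{Q|_{\bar A}}(\psi_0)$, and rewriting this with Lemma \ref{lma:fagafa} to absorb the conjugation of $\alpha_{Q^{(0)}}^{-1}$ by $\alpha_{Q|_A}^{-1}$ shows that $\alpha_{Q|_A}^{-1}(\psi_0)$ and $\alpha_{Q|_{\bar A}}(\psi_0)$ differ by an element of $\SG_{\partial A}$. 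Since $Q_A - Q|_A \in \SC^0_{\partial A}$ and similarly for $\bar A$, Lemma \ref{lma:fafa} further relates $\alpha_{Q_A}^{-1}(\psi_0)$ to $\alpha_{Q_{\bar A}}(\psi_0)$ through an $\SG_{\partial A}$ element. The latter is an invertible defect state on $\partial \bar A = \partial A$ by Lemma \ref{lma:aFAdefect} applied to $\alpha_Q$ with region $\bar A$, and invertibility of a defect is plainly preserved under $\SG_{\partial A}$-equivalence, so $\alpha_{Q_A}^{-1}(\psi_0)$ is an invertible defect state on $\partial A$.

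For the second step, since $A$ is cone-like in $\RR^2$, any sufficient thickening of $\partial A$ is a quasi-one-dimensional region, and the inverse witness $(\SA',\psi')$ realizing invertibility can be arranged to live within it, turning the stacked SRE defect state into an invertible state of a bonafide one-dimensional spin chain. The main theorem of \cite{KSY} then trivializes this 1D invertible state by an LGA whose generator is supported near $\partial A$ and hence belongs to $\SG_{\partial A}$; composing with the invertibility witness yields an element of $\SG_{\partial A}$ that disentangles $\alpha_{Q_A}^{-1}(\psi_0)$ to $\psi_0$, i.e.\ the defect is SRE. The principal obstacle is making this dimensional reduction precise: the superpolynomial-decay hypothesis in \cite{KSY} is formulated for one-dimensional lattices, while $\partial A$ inherits its decay from the ambient two-dimensional metric and additionally has a non-manifold point at the cone apex $p$. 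One must therefore either deform $\partial A$ into two disjoint rays (to which \cite{KSY} applies directly on each arm) or verify that the 1D classification is insensitive to a single conical singularity, and in both cases check that the 2D almost-local structure translates faithfully into the 1D almost-local structure required by the reference.
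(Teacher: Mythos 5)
Your first step is fine and is essentially Lemma \ref{lma:aFAdefect} (which the paper already proves), and your overall instinct --- that the content of the lemma is a one-dimensional triviality statement in the spirit of \cite{KSY} --- matches the paper, which opens its appendix by saying the proof is ``essentially the same as'' the KSY argument. But your second step is where all the work lies, and it is not carried out: you correctly flag the dimensional reduction as ``the principal obstacle'' and then offer only two vague remedies (deform $\p A$ into two rays, or argue insensitivity to the conical singularity), neither of which is what the paper does and neither of which obviously closes the gap. The state $\alpha_{Q_A}^{-1}(\psi_0)$ is a state of the full two-dimensional algebra that is only \emph{approximately} factorized away from $\p A$, with superpolynomial tails into the bulk; there is no exact restriction of it to a thickened strip, so you cannot literally hand it to the 1d theorem of \cite{KSY} as a state of a spin chain. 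Establishing that it can be split along cones at all is itself one of the main technical points (the split property of $\Psi=\psi|_A\otimes\psi'|_{\bar A}$, proved via quasi-equivalence and Lemma \ref{lma:schmidt}).

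What the paper actually does is re-run the KSY proof \emph{in situ} in $\RR^2$: it proves the 2d analogues of the KSY lemmas (Lemmas \ref{lma:almfac}, \ref{lma:schmidt}, \ref{lma:almfacgluing}), constructs truncations $\phi^{\pm}_n$ of $\psi$ supported on cones $C^{\pm}_n$ with uniform (in $n$) localization estimates, shows $\psi$ differs from a product of truncations by an almost local unitary (Lemma \ref{lma:A2}), and proves a dedicated convergence lemma (Lemma \ref{lma:1dLGA}) showing that an infinite ordered product of conjugations by almost local unitaries spaced along $\p A$ --- almost local in the \emph{two-dimensional} metric --- is an LGP. The uniformity of all localization functions in $n$, which makes that infinite product converge, is the quantitative heart of the argument and is absent from your proposal. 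A further issue with invoking \cite{KSY} as a black box: its main theorem yields only \emph{stable} SRE-ness (after tensoring with auxiliary factorized states), whereas the lemma as stated asserts that $\alpha_{Q_A}^{-1}(\psi_0)$ is an SRE defect state on the original algebra, i.e.\ disentangled by an element of $\SG_{\p A}(\SA)$ with no stacking; the paper's direct construction delivers exactly this, and you would need an additional argument to remove the auxiliary factors.
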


\begin{corollary} \label{cor:0ddefect}
All invertible defect states at a point are SRE defects states.
\end{corollary}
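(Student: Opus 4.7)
The plan is to unwind the invertibility hypothesis into a quasi-local unitary on the stacked system, then extract from it an almost-local unitary acting on $\SA$ alone that implements $\psi$ from $\psi_0$, and finally exponentiate to obtain the required LGA in $\SG_p(\SA)$.

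By definition of invertible defect at $p$, there is an auxiliary system $(\SA',\psi_0')$ with a defect state $\psi'$ at $p$ for $\psi_0'$ and an LGA $\alpha_F \in \SG_p(\SA \otimes \SA')$ with $\alpha_F(\psi_0 \otimes \psi_0') = \psi \otimes \psi'$. Applying the statement recursively to $\psi'$ (and composing with the resulting LGA's inverse on the auxiliary factor), one reduces to the case where $\psi'$ is itself an SRE defect at $p$, i.e.\ $\psi' = \alpha_{F'}(\psi_0')$ with $\alpha_{F'} \in \SG_p(\SA')$. Composing yields $\psi \otimes \psi_0' = \alpha_G(\psi_0 \otimes \psi_0')$ with $\alpha_G := (\Id \otimes \alpha_{F'})^{-1} \circ \alpha_F \in \SG_p(\SA \otimes \SA')$. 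Since $\alpha_G \in \SG_p$, it is implemented by conjugation $\alpha_G = \Ad_\CV$ with a unitary observable $\CV$ almost local at $p$.

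Next I would work in the GNS representation of the pure factorized state $\psi_0 \otimes \psi_0'$, in which the Hilbert space factors as $\CH_\SA \otimes \CH_{\SA'}$ with cyclic vector $|\Omega_\SA\rangle \otimes |\Omega_{\SA'}\rangle$. The vector $\CV(|\Omega_\SA\rangle \otimes |\Omega_{\SA'}\rangle)$ represents the pure product state $\psi \otimes \psi_0'$, and purity of both marginals forces this vector to factor as $|\Phi\rangle \otimes |\Omega_{\SA'}\rangle$ up to a phase. The task then is to extract from $\CV$ a unitary observable $W$ on $\SA$ almost local at $p$ with $W|\Omega_\SA\rangle = |\Phi\rangle$; once such $W$ is in hand, $\Ad_W(\psi_0) = \psi$, and writing $W = \CT_s\{\exp(i \int_0^1 H(s)\,ds)\}$ with $H(s) \in \SC^0_p$ produces the required $\alpha_H \in \SG_p(\SA)$. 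I would produce $W$ by truncating $\CV$ to finite-dimensional unitary approximations $\CV_R$ on disks $\Gamma_R(p)$ (with tails controlled by the almost-locality profile of $\CV$), extracting finite-dimensional unitaries $W_R$ on $\SA_{\Gamma_R(p)}$ via the polar decomposition of the partial amplitude $\langle \Omega_{\SA'}|\CV_R|\Omega_{\SA'}\rangle$, and passing to a norm limit.

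The main obstacle is this extraction step: the partial amplitude $\langle \Omega_{\SA'}|\CV|\Omega_{\SA'}\rangle$ sends $|\Omega_\SA\rangle$ correctly to $|\Phi\rangle$ but is merely a contraction, not a unitary. Turning it into an honest unitary $W$ whose almost-locality profile is inherited from that of $\CV$, and then producing a path $H(s)$ of almost-local self-adjoint generators at $p$ whose ordered exponential equals $W$, is where the substantive technical work lies.
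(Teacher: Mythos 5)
Your overall target is the right one --- produce $\psi$ from $\psi_0$ by conjugation with an almost local unitary at $p$ --- but as written the argument has two genuine problems. First, the ``recursive'' reduction is circular: to replace $\psi'$ by an SRE defect you apply the corollary to $\psi'$, but $\psi'$ is an invertible point defect of exactly the same standing as $\psi$ (its inverse is $\psi$), so there is no decreasing parameter and you are assuming the statement in its own proof. The step is also unnecessary: you never need $\psi'$ to be SRE, only that $\alpha_F \in \SG_p(\SA\otimes\SA')$ moves observables far from $p$ by $\Or$ amounts, so that the marginal $\psi$ is $\Or$-close to $\psi_0$ far from $p$.

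Second, and more seriously, the step you defer as ``the substantive technical work'' --- turning the contraction $\lal \Omega_{\SA'}|\CV|\Omega_{\SA'}\ral$ into a unitary $W$ on $\SA$ with controlled almost-locality, and then writing $W$ as an ordered exponential of generators in $\SC^0_p$ --- is the entire content of the corollary, and your sketch does not supply it (note in particular that an almost local unitary need not admit an almost local logarithm, so even granting $W$ you have not produced an LGP). The paper closes precisely this gap by quoting Lemma \ref{lma:almfac}: since $\psi\otimes\psi'$ is pure and $\Or$-close to $\psi_0\otimes\psi_0'$ far from $p$, its tensor factor $\psi$ is a pure state $\Or$-close to $\psi_0$ far from $p$, and Lemma \ref{lma:almfac} yields $\psi=\Ad_{e^{i\CG}}(\psi_0)$ with $\CG$ self-adjoint and almost local at $p$, hence $\psi$ is an SRE defect state via the constant generator $\CG\in\SC^0_p$. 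You should either invoke that lemma (which makes the corollary a two-line argument) or actually carry out its proof; your polar-decomposition outline is in the spirit of that proof but is not a proof.
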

\noindent
The last corollary is also a direct consequence of Lemma \ref{lma:almfac}.

\begin{corollary} \label{cor:localSSB}
Let $\psi$ be a 2d SRE state, and let $\alpha_Q \in \SG$, such that $\alpha_Q(\psi) = \psi$. Then for any cone-like region $A$ there is $K \in \SG_{\p A}$, such that $(\alpha_{Q_{A}} \circ \alpha_{K})(\psi) = \psi$. 
\end{corollary}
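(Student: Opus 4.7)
The approach is to reduce Corollary \ref{cor:localSSB} to the factorized case of Lemma \ref{lma:1ddefect} by conjugating with the LGA that trivializes $\psi$. Since $\psi$ is SRE, I would write $\psi = \alpha_S(\psi_0)$ for a pure factorized state $\psi_0$ and $\alpha_S \in \SG$, so that $\alpha_Q(\psi) = \psi$ becomes $(\alpha_S \circ \alpha_Q \circ \alpha_S^{-1})(\psi_0) = \psi_0$. To apply the lemma one must present this conjugated automorphism as an LGP whose \emph{restriction to $A$} is under control, so I would define the $0$-chain $\tilde Q$ site-by-site by $\tilde Q_j(s) := \alpha_S(Q_j(s))$ with $\alpha_S := \alpha_S(1)$. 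This is a valid $0$-chain because $\alpha_S \in \AutAal$ sends almost-local observables to almost-local ones with a uniformly modified decay, and a direct differentiation (using $\alpha_S \circ \ad_X \circ \alpha_S^{-1} = \ad_{\alpha_S(X)}$) shows that the LGP generated by $\tilde Q(s)$ coincides with $s \mapsto \alpha_S \circ \alpha_Q(s) \circ \alpha_S^{-1}$. The crucial consequence is that, because $\tilde Q$ is built site-by-site from $Q$, the restriction satisfies $\tilde Q_A = \alpha_S(Q_A)$ and the same argument yields
\beq
\alpha_{\tilde Q_A} = \alpha_S \circ \alpha_{Q_A} \circ \alpha_S^{-1}.
\eeq

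With this identity available I would apply Lemma \ref{lma:1ddefect} to $(\psi_0, \alpha_{\tilde Q})$ to obtain $\alpha_L \in \SG_{\p A}$ with $(\alpha_L \circ \alpha_{\tilde Q_A})(\psi_0) = \psi_0$. Substituting $\psi_0 = \alpha_S^{-1}(\psi)$ and the displayed identity for $\alpha_{\tilde Q_A}$ rearranges this to $(\alpha_{K_0} \circ \alpha_{Q_A})(\psi) = \psi$, where $\alpha_{K_0} := \alpha_S^{-1} \circ \alpha_L \circ \alpha_S$; Lemma \ref{lma:fagafa} applied to the chain $L \in \SC^0_{\p A}$ guarantees $\alpha_{K_0} \in \SG_{\p A}$. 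To match the exact form of the conclusion I would then write $\psi = \psi \circ \alpha_{K_0} \circ \alpha_{Q_A} = \psi \circ \alpha_{Q_A} \circ (\alpha_{Q_A}^{-1} \circ \alpha_{K_0} \circ \alpha_{Q_A})$ and set $\alpha_K := \alpha_{Q_A}^{-1} \circ \alpha_{K_0} \circ \alpha_{Q_A}$, which is again in $\SG_{\p A}$ by Lemma \ref{lma:fagafa}, producing $(\alpha_{Q_A} \circ \alpha_K)(\psi) = \psi$ as required.

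The only real obstacle I foresee is the construction of $\tilde Q$: one must resist the default recipe of concatenating the LGPs $\alpha_S$, $\alpha_Q$, $\alpha_S^{-1}$ and instead use the site-by-site conjugation $\tilde Q_j := \alpha_S(Q_j)$, since only this choice makes $\tilde Q_A$ equal to $\alpha_S(Q_A)$ on the nose and thus delivers the clean relation between $\alpha_{\tilde Q_A}$ and $\alpha_{Q_A}$ needed to transport Lemma \ref{lma:1ddefect} through the conjugation. Every remaining step is a formal manipulation based on Lemmas \ref{lma:fafa} and \ref{lma:fagafa}.
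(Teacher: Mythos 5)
Your proposal is correct and follows essentially the same route as the paper: conjugate by the trivializing LGA $\alpha_S$ (the paper's $\alpha_F$), observe that the site-by-site conjugated $0$-chain $Q'=\alpha_S(Q)$ satisfies $\alpha_S\circ\alpha_{Q_A}\circ\alpha_S^{-1}=\alpha_{Q'_A}$ so that Lemma \ref{lma:1ddefect} applies to $(\psi_0,\alpha_{Q'})$, and then conjugate the resulting $\SG_{\p A}$ element back using Lemma \ref{lma:fagafa}. The only (harmless) difference is your final extra conjugation by $\alpha_{Q_A}$ to place $\alpha_K$ on the right, which the paper avoids by choosing the order of composition from the start.
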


\begin{proof}
Note that $(\alpha_F \circ \alpha_{Q_A} \circ (\alpha_F)^{-1})(\psi) = \alpha_{Q'_A}(\psi)$ for $Q' = \alpha_F(Q)$ and any state $\psi$. If $\psi = \alpha_F(\psi_0)$ for some factorized state $\psi_0$, then $(\alpha_F \circ \alpha_{Q} \circ (\alpha_F)^{-1} )(\psi_0) = \alpha_{Q'}(\psi_0) = \psi_0$.  By Lemma \ref{lma:aFAdefect} and Lemma \ref{lma:1ddefect} there is $\alpha_{K'} \in \SG_{\p A}$ such that $(\alpha_{Q'_A} \circ \alpha_{K'})(\psi_0) = \psi_0$. Therefore we can take $\alpha_K = (\alpha_F)^{-1} \circ \alpha_{K'} \circ \alpha_{F}$.
\end{proof}

\begin{corollary} \label{cor:localSSB0d}
Let $\psi$ be a 2d SRE state, and let $\alpha_Q \in \SG_{\p A}$ for a cone-like region $A$ with the origin at $p$, such that $\alpha_Q(\psi) = \psi$. Then for a splitting $Q=Q_- + Q_+$, such that $Q_{\pm} \in \SC^0_{(\p A)_{\pm}}$, there is $N \in \SC^0_{p}$, such that $(\alpha_{Q_-} \circ \alpha_N)(\psi) = \psi$. 
\end{corollary}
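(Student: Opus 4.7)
The plan is to reduce the statement to the $0$-dimensional defect result for factorized states. First, by Lemma \ref{lma:fafa} with $F = Q_-$ and $G = Q_+$, the composition $\gamma := \alpha_Q \circ \alpha_{Q_-}^{-1}$ lies in $\SG_{(\p A)_+}$. Combined with the hypothesis $\alpha_Q(\psi) = \psi$, this gives
\[
\alpha_{Q_-}^{-1}(\psi) = \gamma(\psi),
\]
so the desired conclusion is equivalent to producing $\alpha_N \in \SG_p$ with $\alpha_N(\psi) = \gamma(\psi)$: one then has $(\alpha_{Q_-} \circ \alpha_N)(\psi) = \alpha_{Q_-}(\alpha_{Q_-}^{-1}(\psi)) = \psi$.

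Next I would transfer to the factorized case. Since $\psi$ is SRE, write $\psi = \alpha_H(\psi_0)$ for a pure factorized state $\psi_0$ and some $\alpha_H \in \SG$. Define $\tilde\gamma := \alpha_H^{-1} \circ \gamma \circ \alpha_H$ and $\tilde\alpha_{-} := \alpha_H^{-1} \circ \alpha_{Q_-} \circ \alpha_H$. By Lemma \ref{lma:fagafa}, $\tilde\gamma \in \SG_{(\p A)_+}$ and $\tilde\alpha_{-} \in \SG_{(\p A)_-}$, and the identity above transports to
\[
\phi := \tilde\gamma(\psi_0) = \tilde\alpha_{-}^{-1}(\psi_0).
\]
Thus $\phi$ is simultaneously an SRE defect state of $\psi_0$ on $(\p A)_+$ and on $(\p A)_-$. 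It suffices to find $\alpha_M \in \SG_p$ with $\alpha_M(\psi_0) = \phi$: then $\alpha_N := \alpha_H \circ \alpha_M \circ \alpha_H^{-1} \in \SG_p$ (again by Lemma \ref{lma:fagafa}) satisfies $\alpha_N(\psi) = \alpha_H(\phi) = \gamma(\psi)$.

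To produce $\alpha_M$, I would show $\phi$ is an invertible defect state at the single point $p$ and then invoke Corollary \ref{cor:0ddefect}. Combining the two descriptions of $\phi$: any single-site observable located at distance $r$ from $p$ is at distance at least $cr$ from either $(\p A)_+$ or $(\p A)_-$ (for some positive constant $c$ depending on the angle between the two rays), so using whichever of $\tilde\gamma$ or $\tilde\alpha_{-}^{-1}$ is approximately localized on the far ray yields a superpolynomial bound on $|\phi(\CA) - \psi_0(\CA)|$. A partial-trace decomposition extends this to general observables, showing that $\phi$ is $f$-close to $\psi_0$ far from $p$ for some MDP $f$. Invertibility of the resulting defect is then obtained by stacking with $\tilde\gamma^{-1}(\psi_0)$ and applying a one-dimensional Eilenberg-swindle argument along $(\p A)_+$ to trivialize $\tilde\gamma \otimes \tilde\gamma^{-1}$ away from $p$.

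The hard part is this last step. The LGAs $\tilde\gamma$ and $\tilde\alpha_{-}$ each only control quasi-local decay transverse to a single ray, and one must carefully combine the two sets of MDP estimates at the intersection point $\{p\}$ so that the decay required by the definition of a $0$-dimensional invertible defect really holds uniformly, and so that the $1$-dimensional swindle can be carried out without destroying this superpolynomial decay. Once these estimates are in place, Corollary \ref{cor:0ddefect} supplies the desired $\alpha_M \in \SG_p$, completing the argument.
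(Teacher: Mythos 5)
Your reduction is essentially the one the paper uses: you conjugate by the LGP relating $\psi$ to a factorized state $\psi_0$, and arrive at a pure state $\phi$ which is simultaneously the image of $\psi_0$ under an element of $\SG_{(\p A)_+}$ and under an element of $\SG_{(\p A)_-}$, so that the problem reduces to trivializing $\phi$ near the single point $p$. Up to that point the argument is sound and matches the paper, which phrases the same observation as $\alpha_{Q'_-}^{-1}(\psi_0) = (\alpha_M \circ \alpha_{Q'_+})(\psi_0)$ with $\alpha_M \in \SG_p$.

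The gap is in your last step, and it is a gap you have created for yourself. You propose to show that $\phi$ is an \emph{invertible} defect state at $p$ by stacking with $\tilde\gamma^{-1}(\psi_0)$ and running a one-dimensional Eilenberg swindle along $(\p A)_+$, and then to invoke Corollary \ref{cor:0ddefect}; you correctly flag the required uniform estimates as ``the hard part'' and do not supply them. None of this is needed. Once you know that $\phi$ is $g$-close to $\psi_0$ both on a cone $B$ containing $(\p A)_+$ but not $(\p A)_-$ and on its complement $\bar{B}$ (which follows from the two descriptions of $\phi$, since each generator is approximately localized on the ray opposite to the region being tested), Lemma \ref{lma:almfacgluing} upgrades this to $h$-closeness on all of $\RR^2$ far from $p$ --- this lemma is exactly the ``partial-trace decomposition extends this to general observables'' step you wave at, and it is not automatic, so it should be cited rather than re-derived informally. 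Then, because $\phi$ is pure (it is the image of the pure state $\psi_0$ under an LGA), Lemma \ref{lma:almfac} applies directly and produces a self-adjoint almost local observable $\CG$ at $p$ with $\Ad_{e^{i\CG}}(\psi_0) = \phi$; taking $M = \CG \in \SC^0_p$ gives the desired $\alpha_M$, and conjugating back finishes the proof exactly as you describe. Note that the paper itself remarks that Corollary \ref{cor:0ddefect} is a direct consequence of Lemma \ref{lma:almfac}, so routing through invertibility of the point defect buys nothing and costs you the unproven swindle argument.
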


\begin{proof}
Let $B$ be a cone containing $(\p A)_+$ and not containing $(\p A)_-$. As in the proof of Corollary \ref{cor:localSSB} let $Q' = \alpha_F(Q) \in \SC^0_{\p A}$ which satisfies $\alpha_{Q'}(\psi_0) = \psi_0$, and let $Q'_{\pm} = \alpha_F(Q_{\pm}) \in \SC^0_{(\p A)_{\pm}}$. We have $\alpha_{M} := \alpha_{Q'} \circ \alpha_{Q'_{+}}^{-1} \circ \alpha_{Q'_{-}}^{-1} \in \SG_{p}$ and $\alpha_{Q'_{-}}^{-1}(\psi_0) = (\alpha_{M} \circ \alpha_{Q'_{+}})(\psi_0)$. It implies that $\alpha_{Q'_{-}}^{-1}(\psi_0)$ is $g$-close to $\psi_0$ for some $g(r)=\Or$ both on $B$ and on $\bar{B}$, that by Lemma \ref{lma:almfacgluing} implies that it is $h$-close to $\psi_0$ on $\RR^2$ for some $h(r)=\Or$. By Lemma \ref{lma:almfac} there is $\alpha_{N'}$, such that $(\alpha_{Q'_-} \circ \alpha_{N'}) (\psi_0) = \psi_0$. Then $\alpha_N = (\alpha_F)^{-1} \circ \alpha_{N'} \circ \alpha_{F}$.
\end{proof}

\section{The index for 2d SPT states} \label{sec:index}

\subsection{Definition} \label{ssec:IndexDef}

\begin{figure}
\centering
\begin{tikzpicture}[scale=.5]

\draw [color=blue!20, fill=blue!20, very thick] (0,0) to (0,4) -- (3.4641,4) -- (3.4641,-2) -- (0,0);

\draw[black, very thick] (0,0) -- (3.4641,-2);
\draw[black, very thick] (0,0) -- (0,4);

\filldraw[black] (0,0) circle (7pt) node[anchor=west]{};

\node  at (-.5,-.5) {$p$}; 
\node  at (2,1.5) {$A$};
\node  at (0,4.5) {$(\p A)_-$};
\node  at (1.3*3.4641,-1.3*2) {$(\p A)_+$};

\end{tikzpicture}
\caption{$A$ is a cone-like region at $p$ with boundary components $(\p A)_-$ and $(\p A)_{+}$. 
}
\label{fig:coneA}
\end{figure}
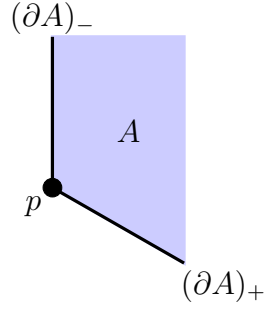

Let $\psi$ be a 2d SPT state on $\SA$ with a symmetry group $G$. By definition we can stack it with a state $(\SA',\psi')$ with a trivial $G$-action, so that the resulting state is SRE. We redefine $\psi$ by $\psi \otimes \psi'$ for the rest of this subsection. Let $\psi_0$ be a factorized pure state such that  $\psi=\alpha_F(\psi_0)$ for some $\alpha_F \in \SG$. Let $w^{(g)}$ be an automorphism, that corresponds to the on-site symmetry action. Their restrictions to any region $A$ satisfy the evident relations
\beq \label{eq:evrel1}
w_A^{(g)} \circ w_A^{(h)} \circ \l w_A^{(g h)} \r^{-1} = \Id,
\eeq
\beq \label{eq:evrel2}
w_A^{(g)} \circ w_A^{(h)} \circ w_A^{(k)} \circ \l w_A^{(g h k)} \r^{-1} = \Id,
\eeq
\beq \label{eq:evrel3}
w_A^{(g)} \circ w_A^{(h)} \circ w_A^{(k)} \circ w_A^{(l)} \circ \l w_A^{(g h k l)} \r^{-1} = \Id .
\eeq
Note that the automorphism $w^{(g)}$ is an LGA, but there is no canonical LGP for it.

We introduce automorphisms $\sfw^{(g)}$ of the group $\SG$ defined by $\sfw^{(g)}(\alpha_F) : = \alpha_{w^{(g)}(F)}$. Similarly, we define $\sfw^{(g)}_A(\alpha_F) : = \alpha_{w^{(g)}_A(F)}$.

Let $A$ be a cone-like region as shown on fig. \ref{fig:coneA}. Let us split $w^{(g)}$ into $w^{(g)}_- \circ w^{(g)}_+$ with $w^{(g)}_+ = w^{(g)}_A$ and $w^{(g)}_- = w^{(g)}_{\bar{A}}$. Since $w^{(g)}(\psi) = \psi$, and since $w^{(g)}$ is can be locally generated by an on-site 0-chain, by Corollary \ref{cor:localSSB} there is $\alpha_{K^{(g)}}^{-1} \in \SG_{\p A}$, such that 
\beq 
\l w_{-}^{(g)} \circ \alpha_{K^{(g)}}^{-1}\r(\psi) = \psi = \l \alpha_{K^{(g)}} \circ  w_{+}^{(g)} \r(\psi).
\eeq 
We define an automorphism $\tilde{\sfw}_{+}^{(g)}$ of $\SG$ by
\beq
\tilde{\sfw}_{+}^{(g)}(\alpha_{F}) := \alpha_{K^{(g)}} \circ \sfw_{+}^{(g)}(\alpha_F) \circ (\alpha_{K^{(g)}})^{-1}
\eeq
which by Lemma \ref{lma:fagafa} is also an automorphism of $\SG_{B}$ for any submanifold $B$. We define an element of $\SG_{\p A}$ by
\beq
\ups^{(g,h)} :=  \alpha_{K^{(g)}} \circ \sfw_+^{(g)}(\alpha_{K^{(h)}}) \circ (\alpha_{K^{(gh)}})^{-1}
\eeq
which depends on the choice of $K$ only. Since $w_+^{(g)} \circ w_+^{(h)} \circ (w_+^{(gh)})^{-1} = \Id$ we have $\ups^{(g,h)}(\psi)=\psi$. By Lemma \ref{lma:splitting} it can be split as
\beq
\ups^{(g,h)} = \ups_{-}^{(g,h)} \circ \ups_{+}^{(g,h)}
\eeq 
for $\ups_{\pm}^{(g,h)} \in \SG_{(\p A)_{\pm}}$. 

Since $\ups^{(g,h)}(\psi)=\psi$, by Corollary \ref{cor:localSSB0d} there is $\alpha_{N^{(g,h)}}^{-1} \in \SG_p$, such that 
\beq 
\l \ups_{-}^{(g,h)} \circ \alpha_{N^{(g,h)}}^{-1} \r = \psi = \l \alpha_{N^{(g,h)}} \circ \ups_{+}^{(g,h)} \r (\psi).
\eeq
We define LGPs
\beq
\tilde{\ups}_{+}^{(g,h)} := \alpha_{N^{(g,h)}} \circ \ups_{+}^{(g,h)},
\eeq
\beq
\tilde{\ups}_{-}^{(g,h)} := \ups_{-}^{(g,h)} \circ (\alpha_{N^{(g,h)}})^{-1}
\eeq
which preserve $\psi$ and give another decomposition 
\beq
\ups^{(g,h)} = \tilde{\ups}^{(g,h)}_- \circ \tilde{\ups}^{(g,h)}_+.
\eeq
Using an identity on LGPs
\beq
\ups^{(g,h)} \circ \ups^{(gh,k)} \circ (\ups^{(g,hk)} )^{-1} \circ \l \tilde{\sfw}_+^{(g)} (\ups^{(h,k)}) \r^{-1} = \alpha_{0}
\eeq
and Lemma \ref{lma:intersection} we can define an LGP 
\beq
\iota^{(g,h,k)} = \tilde{\ups}_{+}^{(g,h)} \circ \tilde{\ups}_{+}^{(gh,k)} \circ \l \tilde{\ups}_{+}^{(g,hk)} \r^{-1} \circ \l \tilde{\sfw}_+^{(g)} (\tilde{\ups}_{+}^{(h,k)})\r^{-1}
\eeq
which is an element of $\SG_p$ and depends on the choice of $K$ and $N$ only. Note that $\iota^{(g,h,k)}(\psi) = \psi$. Since $\iota^{(g,h,k)} \in \SG_p$, it has the corresponding unitary observable. We denote it by $\CI^{(g,h,k)}$. 

For the GNS representation $(\Pi_{\psi},\CH_{\psi},|\psi\ral)$ of $\psi$ the condition $\Ad_{\CU}(\psi) = \psi$ for a unitary observable $\CU$ (in particular $\CI^{(g,h,k)}$) implies that $\Pi_{\psi}(\CU)^* | \psi \ral$ coincides with $|\psi\ral$ up to a phase. Therefore, $|\lal \CU \ral_{\psi}|=1$ and $\lal \CA \CU \ral_{\psi} = \lal \CA \ral_{\psi} \lal \CU \ral_{\psi}$ for any observable $\CA \in \SA$.

We define
\beq
\omega(g,h,k) := \lal \CI^{(g,h,k)} \ral_{\psi}.
\eeq
that takes values in $U(1)$.

We have the following relations between LGPs
\begin{multline}  \label{eq:AUTpentagon}
\iota^{(g,h,k)} \circ (\tilde{\sfw}_+^{(g)}(\tilde{\ups}_+^{(h,k)})) \circ \iota^{(g,hk,l)} \circ (\tilde{\sfw}_+^{(g)}(\tilde{\ups}_+^{(hk,l)})) \circ \tilde{\ups}_+^{(g,hkl)} = \\
= \iota^{(g,h,k)} \circ (\tilde{\sfw}_+^{(g)}(\tilde{\ups}_+^{(h,k)})) \circ \tilde{\ups}_+^{(g,hk)} \circ \tilde{\ups}_+^{(ghk,l)} = \\
\tilde{\ups}_+^{(g,h)} \circ \tilde{\ups}_+^{(gh,k)} \circ \tilde{\ups}_+^{(ghk,l)} = \\
= \tilde{\ups}_+^{(g,h)} \circ \iota^{(gh,k,l)} \circ (\tilde{\sfw}_+^{(gh)} (\tilde{\ups}_+^{(k,l)} )) \circ \tilde{\ups}_+^{(gh,kl)} = \\
= \tilde{\ups}_+^{(g,h)} \circ \iota^{(gh,k,l)} \circ (\tilde{\sfw}_+^{(gh)} (\tilde{\ups}_+^{(k,l)}) ) \circ (\tilde{\ups}_+^{(g,h)})^{-1} \circ \iota^{(g,h,kl)} \circ (\tilde{\sfw}_+^{(g)} (\tilde{\ups}_+^{(h,kl)})) \circ \tilde{\ups}_+^{(g,hkl)}.
\end{multline}
Multiplying the last line by the inverse of the first and using
\beq
\tilde{\ups}_{+}^{(hk,l)} \circ \l \tilde{\ups}_{+}^{(h,kl)} \r^{-1} = (\tilde{\ups}_{+}^{(h,k)})^{-1} \circ (\iota^{(h,k,l)})^{-1} \circ \l \tilde{\sfw}_+^{(h)} (\tilde{\ups}_{+}^{(k,l)}) \r^{-1}
\eeq
we get the pentagon relation on LGPs
\begin{multline}
\tilde{\ups}_+^{(g,h)} \circ \iota^{(gh,k,l)} \circ (\tilde{\sfw}_+^{(gh)} (\tilde{\ups}_+^{(k,l)}) ) \circ (\tilde{\ups}_+^{(g,h)})^{-1} \circ \\ \circ \iota^{(g,h,kl)} \circ \tilde{\sfw}_+^{(g)} \l ( \tilde{\ups}_+^{(h,k)})^{-1} \circ (\iota^{(h,k,l)})^{-1} \circ \l \tilde{\sfw}_+^{(h)} (\tilde{\ups}_{+}^{(k,l)})\r^{-1} \r  \\  \circ (\iota^{(g,hk,l)})^{-1} \circ \l \tilde{\sfw}_+^{(g)}(\tilde{\ups}_+^{(h,k)})\r^{-1} \circ (\iota^{(g,h,k)})^{-1} = \alpha_0
\end{multline}
In the GNS representation $(\Pi_{\psi},\CH_{\psi},|\psi\ral)$ of $\psi$ the LGAs corresponding to $\alpha_0, \iota^{(g,h,k)}$, $\tilde{\ups}_{+}^{(g,h)}$, $\alpha_{K^{(g)}} \circ w_{+}^{(g)}$ are represented by conjugations with unitary operators on $\CH_{\psi}$ preserving $|\psi\ral \lal \psi|$, and therefore acting on $|\psi\ral$ by multiplication by a phase. We can take this unitaries for $\iota^{(g,h,k)}$ and $\alpha_0$ to be $\Pi_{\psi}(\CI^{(g,h,k)})$ and $\Pi_{\psi}(1)$, respectively. The pentagon relation on LGPs then gives a pentagon relation on the corresponding unitaries, and since the phase ambiguity for unitaries corresponding to $\tilde{\ups}_{+}^{(g,h)}$, $\alpha_{K^{(g)}} \circ w_{+}^{(g)}$ cancels in this relation, by taking the expectation value $\lal \psi| \cdot |\psi \ral$ we get
% Since for unitary observables $\CU_a$, such that $\Ad_{\CU_a}(\psi) = \psi$, we have
% \beq \label{eq:Urel1}
% \lal \CU_a \ral_{\psi} = \lal \tilde{\ups}_{+}^{(g,h)}(\CU_a) \ral_{\psi} = \lal \alpha_{K^{(g)}} \circ w_+^{(g)}(\CU_a) \ral_{\psi}
% \eeq
% \beq \label{eq:Urel2}
% \lal \CU_1 \CU_2 ... \CU_{n} \ral_{\psi} = \lal \CU_1 \ral_{\psi} \lal \CU_2 \ral_{\psi} ... \lal \CU_n \ral_{\psi}
% \eeq
% from the pentagon relation we get
\beq
\omega(g,h,k) \omega(g,hk,l) \omega(h,k,l) = \omega(gh,k,l) \omega(g,h,kl)
\eeq
For unitary observables $\CU_a$, such that $\Ad_{\CU_a}(\psi) = \psi$, we also have
\beq \label{eq:Urel1}
\lal \CU_a \ral_{\psi} = \lal \tilde{\ups}_{+}^{(g,h)}(\CU_a) \ral_{\psi} = \lal \alpha_{K^{(g)}} \circ w_+^{(g)}(\CU_a) \ral_{\psi}
\eeq
\beq \label{eq:Urel2}
\lal \CU_1 \CU_2 ... \CU_{n} \ral_{\psi} = \lal \CU_1 \ral_{\psi} \lal \CU_2 \ral_{\psi} ... \lal \CU_n \ral_{\psi}
\eeq

\begin{theorem}
The cohomology class $[\omega]\in H^3(G,U(1))$ depends on $\psi$ only, i.e. it does not depend on the choice of $K^{(g)}$ and $N^{(g,h)}$ and the cone-like region $A$.
\end{theorem}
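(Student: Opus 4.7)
The strategy is to show that altering each of the choices entering the construction --- the LGPs $N^{(g,h)}$, the splitting $\ups^{(g,h)}=\ups^{(g,h)}_-\circ \ups^{(g,h)}_+$, the LGPs $K^{(g)}$, and the cone-like region $A$ --- changes the 3-cochain $\omega$ only by a coboundary with values in $U(1)$. The common mechanism is that the difference between any two admissible choices is always an LGP preserving $\psi$ and approximately localized at $p$ (or, at worst, on $\p A$, reduced to the first case via Corollary \ref{cor:localSSB0d}). Such an LGP has a corresponding unitary observable whose expectation value in $\psi$ has modulus one and is multiplicative on composition by (\ref{eq:Urel1})--(\ref{eq:Urel2}), which supplies the 2-cochain whose coboundary corrects $\omega$.

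I would begin with the innermost choice $N^{(g,h)}$. Two admissible choices $N,N'$ differ by $\alpha_{\mu^{(g,h)}}:=\alpha_{N'^{(g,h)}}\circ \alpha_{N^{(g,h)}}^{-1}\in\SG_p$ preserving $\psi$, whose corresponding unitary $\CM^{(g,h)}$ has phase $\nu(g,h):=\lal \CM^{(g,h)}\ral_{\psi}\in U(1)$. Substituting $\alpha_{\mu^{(g,h)}}\circ \tilde{\ups}^{(g,h)}_+$ for $\tilde{\ups}^{(g,h)}_+$ in the definition of $\iota^{(g,h,k)}$ and commuting the $\alpha_{\mu}$'s past $\tilde{\ups}^{(\cdot,\cdot)}_+$ and $\tilde{\sfw}^{(g)}_+$ by Lemma \ref{lma:fagafa} expresses the new $\iota'^{(g,h,k)}$ as the old $\iota^{(g,h,k)}$ pre-composed with a product of conjugates of $\alpha_{\mu^{(g,h)}},\alpha_{\mu^{(gh,k)}},\alpha_{\mu^{(g,hk)}}^{-1},\tilde{\sfw}^{(g)}_+(\alpha_{\mu^{(h,k)}})^{-1}$, all lying in $\SG_p$ and preserving $\psi$. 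Taking corresponding unitary observables, expectations, and applying (\ref{eq:Urel1})--(\ref{eq:Urel2}) gives $\omega'(g,h,k)=\omega(g,h,k)\cdot \nu(g,h)\,\nu(gh,k)\,\nu(g,hk)^{-1}\,\nu(h,k)^{-1}$, the product of $\omega$ with a coboundary of the 2-cochain $\nu$. Any change of the splitting $\ups^{(g,h)}=\ups^{(g,h)}_-\circ \ups^{(g,h)}_+$ differs on each side by an element of $\SG_p$ by Lemma \ref{lma:intersection}, so it amounts to a shift of $N^{(g,h)}$ already handled.

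Next I would address a change $K^{(g)}\to K'^{(g)}$. Their ratio $\alpha_{\kappa^{(g)}}:=\alpha_{K'^{(g)}}\circ \alpha_{K^{(g)}}^{-1}\in\SG_{\p A}$ preserves $\psi$; by Corollary \ref{cor:localSSB0d}, given a splitting $\kappa^{(g)}=\kappa^{(g)}_-+\kappa^{(g)}_+$ there is $\alpha_{\lambda^{(g)}}\in\SG_p$ with $(\alpha_{\kappa^{(g)}_-}\circ \alpha_{\lambda^{(g)}})(\psi)=\psi$. Substituting $K'$ for $K$ conjugates $\tilde{\sfw}^{(g)}_+$ by $\alpha_{\kappa^{(g)}}$ and inserts factors $\alpha_{\kappa^{(g)}}$ and $\tilde{\sfw}^{(g)}_+(\alpha_{\kappa^{(h)}})$ into $\ups^{(g,h)}$. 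I would reorganize these with Lemmas \ref{lma:fagafa} and \ref{lma:splitting}: the $+$-components, modulo the $\alpha_{\lambda^{(\cdot)}}$'s, are absorbed into the new $\ups^{(g,h)}_+$, the $-$-components into $\ups^{(g,h)}_-$, and the residual discrepancy is an LGP $\alpha_{\tau^{(g,h)}}\in\SG_p$ preserving $\psi$ built from the $\alpha_{\lambda^{(g)}}$'s. This amounts to replacing $\tilde{\ups}^{(g,h)}_+$ by $\alpha_{\tau^{(g,h)}}\circ \tilde{\ups}^{(g,h)}_+$, which by the analysis of Step 1 modifies $\omega$ by the coboundary of the 2-cochain $(g,h)\mapsto\lal \CU^{(g,h)}\ral_{\psi}$ of the corresponding unitary $\CU^{(g,h)}$ of $\alpha_{\tau^{(g,h)}}$. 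I expect this step to be the main technical obstacle, because the shift of $K^{(g)}$ propagates simultaneously into $\tilde{\sfw}^{(g)}_+$ and into $\ups^{(g,h)}$, so two layers of corrections must be rewritten in $\SG_p$ in a compatible way.

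Finally, for the cone-like region $A$: a small deformation $A\to A'$ with the same origin $p$ changes $w^{(g)}_+$ by an on-site symmetry action supported in $A\triangle A'$, a bounded strip near $\p A\cup \p A'$; this extra LGA is absorbed into a redefinition of $K^{(g)}$ via Lemma \ref{lma:fafa}, reducing to Step 2. Any two cone-like regions with the same origin are connected by a finite chain of such deformations (using intermediate cones if the angular change is large), and a change of origin $p\to p'$ amounts to a translation of the whole construction whose effect on $K^{(g)}$ and $N^{(g,h)}$ is absorbed in the same way. Combining Steps 1--3 with this final reduction shows that $[\omega]\in H^3(G,U(1))$ depends only on $\psi$.
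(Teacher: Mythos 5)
Your treatment of the $N^{(g,h)}$-dependence is exactly the paper's: the change is a shift $\tilde{\ups}^{(g,h)}_+\to\alpha_{N'^{(g,h)}}\circ\tilde{\ups}^{(g,h)}_+$ by an element of $\SG_p$ preserving $\psi$, and eqs.~(\ref{eq:Urel1})--(\ref{eq:Urel2}) turn this into multiplication of $\omega$ by the coboundary of $\mu(g,h)=\lal\CV^{(g,h)}\ral_\psi$. The reduction of a change of splitting to a change of $N$ via Lemma~\ref{lma:intersection} is also fine. Where you diverge is in the $K^{(g)}$- and $A$-dependence, and this is where the gaps are. The paper does \emph{not} attempt the algebraic bookkeeping you outline in Step 2; it first observes that, $N$-independence being established, the point $p$ and the splitting can be moved anywhere along $\p A$, and then places $p$ at distance $r$ from a cone $C_+$ containing the support of the change of $K^{(g)}$ (or of the deformation of $(\p A)_+$). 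The change then affects $\omega(g,h,k)$ only by $\Or$ terms, and since $r$ is arbitrary the effect vanishes. This locality argument is the whole point of the paper's proof and it is absent from your plan.

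Your Step 2, as written, has a real hole you yourself flag: changing $K^{(g)}$ simultaneously conjugates $\tilde{\sfw}^{(g)}_+$ by $\alpha_{\kappa^{(g)}}\in\SG_{\p A}$ \emph{and} inserts factors $\alpha_{\kappa^{(g)}}$, $\tilde{\sfw}^{(g)}_+(\alpha_{\kappa^{(h)}})$, $\alpha_{\kappa^{(gh)}}^{-1}$ into $\ups^{(g,h)}$. Because $\tilde{\sfw}^{(g)}_+$ enters $\iota^{(g,h,k)}$ through the term $\tilde{\sfw}^{(g)}_+(\tilde{\ups}^{(h,k)}_+)$, the modification is not simply a left-multiplication of $\tilde{\ups}^{(g,h)}_+$ by an element of $\SG_p$, and showing that the residual is a coboundary of a $2$-cochain requires a genuine computation that your sketch does not supply. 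Your Step 3 contains an error of substance: for two cone-like regions $A$, $A'$ with the same origin, $A\triangle A'$ is generically an unbounded wedge, not ``a bounded strip near $\p A\cup\p A'$,'' so the on-site action of $G$ on $A\triangle A'$ is not in $\SG_{\p A}$ and cannot be absorbed into a redefinition of $K^{(g)}\in\SC^0_{\p A}$; subdividing into ``small'' angular deformations does not fix this, since even a thin wedge fails the superpolynomial localization on $\p A$. The paper's device of pushing $p$ arbitrarily far from the locus of the deformation handles both issues at once; without it (or a substitute), your Steps 2 and 3 do not close.
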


\begin{proof}
Any change of $N^{(g,h)}$ corresponds to a redefinition $\tilde{\ups}^{(g,h)}_{+} \to \alpha_{N'^{(g,h)}} \circ \tilde{\ups}^{(g,h)}_{+}$ for some $\alpha_{N'^{(g,h)}} \in \SG_p$, such that $\alpha_{N'^{(g,h)}}(\psi) = \psi$. Let $\CV^{(g,h)}$ be the corresponding unitary observable for $\alpha_{N'^{(g,h)}}$. Using eq. (\ref{eq:Urel1}) and eq. (\ref{eq:Urel2}) we get
\beq
\omega(g,h,k) \to \omega(g,h,k) \frac{\mu(h,k) \mu(g,hk)}{\mu(gh,k) \mu(g,h)}
\eeq
for $\mu(g,h) = \lal \CV^{(g,h)} \ral_{\psi} $, so that $[\omega]$ is not affected.

Since $[\omega]$ in independent of $N^{(g,h)}$, we can change the position of $p$ and the splitting $\ups^{(g,h)} = \ups^{(g,h)}_{-} \circ \ups^{(g,h)}_{+}$ to any point on $\p A$ without changing $[\omega]$. 

Let $C_+$ be a cone containing $(\p A)_{+}$. Suppose we change $K^{(g)}$ or the boundary $(\p A)_{+}$ itself (so that we still have a cone-like region $A$) inside $C_+$. We can take $p$ to be at distance $r$ from $C_+$, and such change can only affect $\omega(g,h,k)$ by $\Or$ terms. Since $r$ can be arbitrary large, this term is actually vanishing. Similarly, any change of $K^{(g)}$ or $(\p A)_-$ inside the complement of $C_+$ can't change $[\omega]$.
\end{proof}

\begin{remark}
Note that our index is defined only using the algebra of observables and structural properties of invertible states of lower dimensions. We don't have to introduce any representation of this algebra.
\end{remark}

\begin{remark}
The structural properties of $d$-dimensional invertible states for $d>1$ are not fully understood. That gives an obstruction for us to define this index for $d$-dimensional systems with $d>2$, since it is believed that in $d>1$ there are invertible states which are not SRE (e.g. a conjectural Kitaev's $E_8$ state \cite{kitaev2006anyons}). However, this is not surprising since it is known that there are states which are not captured by cohomology classification in higher dimensions even for unitary symmetries (see \cite{fidkowski2020exactly}).
\end{remark}

\subsection{Basic properties}

\begin{prop}
The index is multiplicative under stacking, i.e. the index of $(\SA_{12},\psi_{12}) = (\SA_1,\psi_1) \otimes (\SA_2, \psi_2)$ is a product of indices for $(\SA_1,\psi_1)$ and $(\SA_2,\psi_2)$.
\end{prop}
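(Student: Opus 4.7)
The plan is to exploit the freedom in the definition of the index (Section \ref{ssec:IndexDef}) and make every auxiliary choice $(F, K^{(g)}, N^{(g,h)})$ for the stacked system respect the tensor product decomposition $\SA_{12} = \SA_1 \otimes \SA_2$. With such coordinated choices, every LGP constructed in the definition of $\omega_{12}$ factors as a composition of the corresponding LGPs for the two summands, and the multiplicativity of expectation values of observables on disjoint tensor factors in a product state will give the claim.

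After replacing each $\psi_i$ by a suitable stabilization so that it is SRE, fix $\psi_i = \alpha_{F_i}(\psi_{0,i})$ with factorized pure states $\psi_{0,i}$, together with data $K_i^{(g)}, N_i^{(g,h)}$ yielding cocycles $\omega_i$ representing the indices. For the stack, take the factorized pure state $\psi_{0,12} = \psi_{0,1} \otimes \psi_{0,2}$ and $F_{12} = F_1 + F_2$, where each $F_i$ is viewed as a 0-chain on $\SA_{12}$ acting as the identity on the complementary factor. The symmetry action decouples as $w_{12}^{(g)} = w_1^{(g)} \circ w_2^{(g)}$, and restricted to a common cone-like region $A$ we may take $K_{12}^{(g)} := K_1^{(g)} + K_2^{(g)}$. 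Since $\alpha_{K_1^{(g)}}$ and $\alpha_{K_2^{(g)}}$ act on independent tensor factors they commute, and each satisfies the defining relation on its own factor, so the composition satisfies the corresponding relation for $\psi_{12}$.

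Propagating the choice, the definition of $\ups^{(g,h)}$ yields $\ups_{12}^{(g,h)} = \ups_1^{(g,h)} \circ \ups_2^{(g,h)}$ with commuting factors, so the splitting at $(\p A)_{\pm}$ can be done factor-wise, and one can take $N_{12}^{(g,h)} = N_1^{(g,h)} + N_2^{(g,h)}$. Substituting into the definition of $\iota$ gives $\iota_{12}^{(g,h,k)} = \iota_1^{(g,h,k)} \circ \iota_2^{(g,h,k)}$ in $\SG_p(\SA_{12})$. The corresponding unitary observables are time-ordered exponentials of Hamiltonians supported on disjoint tensor factors, hence factorize as $\CI_{12}^{(g,h,k)} = \CI_1^{(g,h,k)} \CI_2^{(g,h,k)}$, where the two factors commute, each being an observable of a different subalgebra.

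Finally, since $\psi_{12} = \psi_1 \otimes \psi_2$ is a product state,
\begin{equation}
\omega_{12}(g,h,k) = \lal \CI_{12}^{(g,h,k)} \ral_{\psi_{12}} = \lal \CI_1^{(g,h,k)} \ral_{\psi_1} \lal \CI_2^{(g,h,k)} \ral_{\psi_2} = \omega_1(g,h,k)\, \omega_2(g,h,k),
\end{equation}
so that $[\omega_{12}] = [\omega_1] \cdot [\omega_2]$ in $H^3(G, U(1))$. The only thing to verify along the way is that each factorized choice is legitimate in the sense of Section \ref{ssec:IndexDef}; this is automatic because both the reference state and the symmetry action decouple across the tensor factors, so each constraint reduces to the conjunction of two independent constraints on the two subsystems. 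Thanks to the well-definedness of the index (Theorem 1), no other choice needs to be considered, so the only real content is this bookkeeping of tensor-factor disjointness.
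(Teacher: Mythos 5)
Your proof is correct and follows essentially the same route as the paper: the paper's one-line proof simply asserts $\CI_{12}^{(g,h,k)} = \CI_1^{(g,h,k)} \otimes \CI_2^{(g,h,k)}$ and factorizes the expectation value in the product state, implicitly relying on exactly the coordinated tensor-factorized choices of $F$, $K^{(g)}$, $N^{(g,h)}$ that you spell out. Your version just makes explicit the bookkeeping (and the appeal to the well-definedness theorem) that the paper leaves tacit.
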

\begin{proof}
For a stack $(\SA_{12},\psi_{12}) = (\SA_1,\psi_1) \otimes (\SA_2, \psi_2)$ of two different states $(\SA_1,\psi_1)$ and $(\SA_2, \psi_2)$ we have
\beq
\lal \CI_{12}^{(g,h,k)} \ral_{\psi_{12}} = \lal (\CI_1^{(g,h,k)} \otimes \CI_2^{(g,h,k)}) \ral_{\psi_{12}} = \lal \CI_1^{(g,h,k)} \ral_{\psi_1} \lal \CI_2^{(g,h,k)} \ral_{\psi_2},
\eeq
\end{proof}

\begin{prop}
Any two states in the same SPT phase have the same index.
\end{prop}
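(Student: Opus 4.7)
The plan is to reduce ``same SPT phase'' to two elementary moves using the multiplicativity proposition just proved. Two SPT states $\psi_1, \psi_2$ lie in the same phase if and only if there exist $G$-invariant factorized states $\psi'_1, \psi'_2$ with matching quasi-local algebras such that $\psi_1 \otimes \psi'_1$ and $\psi_2 \otimes \psi'_2$ are $G$-equivariantly LGA-equivalent. By the multiplicativity proposition, it therefore suffices to check (a) that the index of any $G$-invariant pure factorized state is trivial, and (b) that the index is invariant under $G$-equivariant LGA-equivalence.

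Claim (a) is straightforward: for a $G$-invariant factorized $\psi_0$ each on-site state is $G$-invariant, so $w_+^{(g)}(\psi_0) = \psi_0$ for any cone-like $A$. One may then take $K^{(g)} = 0$ and $N^{(g,h)} = 0$, giving $\iota^{(g,h,k)} = \alpha_0$, $\CI^{(g,h,k)} = 1$, and $\omega \equiv 1$.

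For claim (b), given $\psi_2 = \alpha_F(\psi_1)$ with $F$ a $G$-invariant 0-chain, I would exploit the freedom in the auxiliary data to transport the choices from $\psi_1$ to $\psi_2$ via $\alpha_F$. Witness $\psi_1 = \alpha_{F_1}(\psi_0)$ for some factorized $\psi_0$ and witness $\psi_2 = \alpha_{F_2}(\psi_0)$ via $\alpha_{F_2} := \alpha_F \circ \alpha_{F_1}$. The $G$-invariance of $F$ gives $\alpha_F \circ w^{(g)} = w^{(g)} \circ \alpha_F$, so that
\[
\alpha_{F_2} \circ w^{(g)} \circ \alpha_{F_2}^{-1} = \alpha_F \circ (\alpha_{F_1} \circ w^{(g)} \circ \alpha_{F_1}^{-1}) \circ \alpha_F^{-1},
\]
and reusing the same datum $K'^{(g)} \in \SG_{\p A}$ from Lemma \ref{lma:1ddefect} for both sides one obtains compatible $K_i^{(g)}$ related by $\alpha_{K_2^{(g)}} = \alpha_F^{-1} \circ \alpha_{K_1^{(g)}} \circ \alpha_F$, and analogously for $N^{(g,h)}$ via Corollary \ref{cor:localSSB0d}. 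Propagating through the construction yields $\iota_2^{(g,h,k)} = \alpha_F^{-1} \circ \iota_1^{(g,h,k)} \circ \alpha_F$, whose corresponding unitary observable is $\CI_2^{(g,h,k)} = \alpha_F^{-1}(\CI_1^{(g,h,k)})$. The conclusion is then immediate:
\[
\omega_2(g,h,k) = \lal \CI_2^{(g,h,k)} \ral_{\psi_2} = \lal \alpha_F(\alpha_F^{-1}(\CI_1^{(g,h,k)})) \ral_{\psi_1} = \omega_1(g,h,k).
\]

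The main obstacle is the locality bookkeeping required to verify admissibility of the transported choices. Conjugation by the global $\alpha_F$ can in principle spread the support of $\alpha_{K_1^{(g)}}$ beyond $\p A$, so one must show, using the $G$-invariance of $F$ together with Lemmas \ref{lma:fafa} and \ref{lma:fagafa}, that $\alpha_F^{-1} \circ \alpha_{K_1^{(g)}} \circ \alpha_F$ still lies in $\SG_{\p A}$ up to an LGP preserving $\psi_2$, and satisfies the defect-cancellation condition $w_-^{(g)}(\psi_2) = \alpha_{K_2^{(g)}}(\psi_2)$. Any residual discrepancy between the conjugated data and an admissible choice lies in $\SG_p$ preserving $\psi_2$, and contributes at most a 2-cochain $\mu$ whose coboundary $\delta\mu$ accounts for any phase mismatch---precisely as in the $N$-independence argument in the proof of the well-definedness theorem---so in any case $[\omega_{\psi_2}] = [\omega_{\psi_1}]$ in $H^3(G, U(1))$.
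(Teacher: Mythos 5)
Your overall strategy---reduce to multiplicativity, triviality of the index for a $G$-invariant factorized state, and invariance under $G$-equivariant LGA-equivalence, then transport the auxiliary data through $\alpha_F$---is the same as the paper's, and your final computation $\lal \CI_2^{(g,h,k)}\ral_{\psi_2}=\lal \CI_1^{(g,h,k)}\ral_{\psi_1}$ is exactly how the paper concludes. The gap is in the transport formula for $K^{(g)}$. The defining property of $K^{(g)}$ is that $\alpha_{K^{(g)}}\circ w_+^{(g)}$ preserves the state, and $\beta$ preserves $\psi_2=\alpha_F(\psi_1)$ iff $\alpha_F\circ\beta\circ\alpha_F^{-1}$ preserves $\psi_1$. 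Imposing $\alpha_F\circ\alpha_{K_2^{(g)}}\circ w_+^{(g)}\circ\alpha_F^{-1}=\alpha_{K_1^{(g)}}\circ w_+^{(g)}$ forces
\beq
\alpha_{K_2^{(g)}}=\alpha_F^{-1}\circ\alpha_{K_1^{(g)}}\circ\sfw_+^{(g)}(\alpha_F),
\eeq
i.e.\ the right-hand conjugating factor must be twisted by $\sfw_+^{(g)}$; this is precisely the choice the paper makes (in its direction of transport, $\alpha_{K'^{(g)}}=\alpha_F\circ\alpha_{K^{(g)}}\circ\sfw_+^{(g)}(\alpha_F)^{-1}$). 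Your naive conjugate $\alpha_F^{-1}\circ\alpha_{K_1^{(g)}}\circ\alpha_F$ differs from the admissible choice by $\alpha_F^{-1}\circ\sfw_+^{(g)}(\alpha_F)$, and this discrepancy does \emph{not} sit in $\SG_p$: $G$-invariance of $F$ only gives $w^{(g)}(F)=F$, so $w_+^{(g)}(F)-F\in\SC^0_{\p A}$ and the discrepancy is a generically nontrivial element of $\SG_{\p A}$ supported along the entire line $\p A$. Hence your fallback ("any residual discrepancy lies in $\SG_p$ and contributes a coboundary $\delta\mu$") fails. Worse, without admissibility the construction does not even produce a candidate $\omega_2$: the resulting $\ups^{(g,h)}$ need not preserve $\psi_2$, so Corollary \ref{cor:localSSB0d} cannot be invoked to produce $N^{(g,h)}$ and $|\lal\CI\ral_\psi|=1$ is no longer guaranteed. (By contrast, the worry that conjugation by the global $\alpha_F$ spreads supports beyond $\p A$ is a non-issue---that is exactly Lemma \ref{lma:fagafa}.)

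Once the twist is inserted the rest of your plan goes through: the $\sfw_+^{(g)}(\alpha_F)$ factors cancel in $\ups^{(g,h)}$ and in the conjugated $\tilde{\sfw}_+^{(g)}$, so $\ups^{(g,h)}_{2,+}$, $\alpha_{N_2^{(g,h)}}$ and $\iota_2^{(g,h,k)}$ are plain $\alpha_F$-conjugates of the data for $\psi_1$ and the index is preserved on the nose (not merely up to coboundary). Two smaller omissions: you do not address independence of the index from the auxiliary state with trivial $G$-action used to render $\psi$ SRE in the definition (the paper's first paragraph handles this by running both computations inside a common stack), and claim (a) should be verified for the stabilized SRE state rather than for $\psi_0$ alone; both are easy but belong in a complete argument.
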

\begin{proof}
First, note that the index does not depend on the state $\psi'$ with a trivial $G$-action chosen at the beginning of subsection \ref{ssec:IndexDef} to produce SRE state $\psi \otimes \psi'$. Indeed, if there is another such state $\psi''$, we can realize both computations of the index using $\psi \otimes \psi'$ and $\psi \otimes \psi''$ on a system $\psi \otimes \psi' \otimes \psi'' \otimes \psi'''$, where $\psi'''$ is a copy of $\psi$ with a trivial $G$-action.

Second, the index is not affected if we stack the state $\psi$ with a $G$-invariant pure factorized state, since the index is multiplicative and the latter has the trivial index. It is left to show that the index is not affected by $G$-equivariant LGP.

Let $\alpha_F$ be a $G$-equivariant LGP, such that $\psi = \alpha_F(\psi')$ for $G$-invariant SPT states $\psi$ and $\psi'$. Note that $\alpha_{F} \circ \sfw^{(g)}_{+}(\alpha_F)^{-1} \in \SG_{\p A}$. Suppose we have a choice of $K^{(g)}$ and $N^{(g,h)}$ for the state $\psi$. For the state $\psi'$ we can choose 
\beq
\alpha_{K'^{(g)}} = \alpha_F \circ \alpha_{K^{(g)}} \circ \sfw_{+}^{(g)}(\alpha_{F})^{-1} \in \SG_{\p A},
\eeq
\beq
\ups'^{(g,h)}_{+} = \alpha_F \circ \ups^{(g,h)}_{+} \circ (\alpha_{F})^{-1} \in \SG_{(\p A)_+},
\eeq
\beq
\alpha_{N'^{(g,h)}} = \alpha_F \circ \alpha_{N^{(g,h)}} \circ (\alpha_{F})^{-1} \in \SG_p.
\eeq
Then for the state $\psi'$ we have $\iota'^{(g,h,k)} = \alpha_F \circ \iota^{(g,h,k)} \circ (\alpha_{F})^{-1}$ with the corresponding observable $\CI'^{(g,h,k)} = \alpha_F(\CI^{(g,h,k)})$. Therefore, the index of the state $\psi'$
\beq
\lal \CI'^{(g,h,k)} \ral_{\psi'} = \lal \alpha_F(\CI^{(g,h,k)}) \ral_{\psi'} = \lal \CI^{(g,h,k)} \ral_{\psi}.
\eeq
\end{proof}

\subsection{Example}

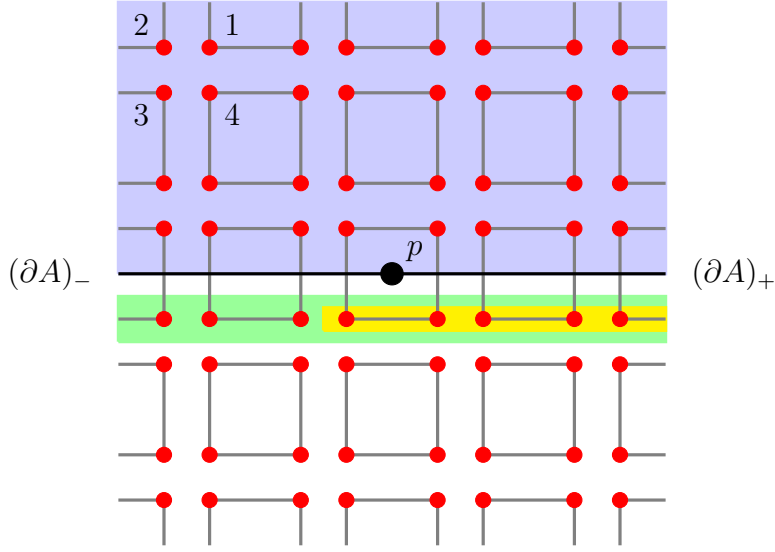
\begin{figure}
\centering
\begin{tikzpicture}[scale=.6]

\draw [color=blue!20, fill=blue!20, very thick] (-6,0) to (6,0) -- (6,6) -- (-6,6) -- (-6,0);

\draw [color=green!40, fill=green!40, very thick] (-6,-1.5) to (6,-1.5) -- (6,-0.5) -- (-6,-0.5) -- (-6,-1.5);

\draw [color=yellow!100, fill=yellow!100, very thick] (-1.5,-1.25) to (6,-1.25) -- (6,-0.75) -- (-1.5,-0.75) -- (-1.5,-1.25);

\foreach \x in {-5,-2,...,4}{                           % Two indices running over each
    \foreach \y in {-4,-1,...,2}{                       % node on the grid we have drawn 
    \draw[gray, very thick] (\x,\y) -- (\x,\y+2);
    \draw[gray, very thick] (\x+1,\y) -- (\x+1,\y+2);
    }
}

\foreach \x in {-4,-1,...,2}{                           % Two indices running over each
    \foreach \y in {-5,-2,...,5}{                       % node on the grid we have drawn 
    \draw[gray, very thick] (\x,\y) -- (\x+2,\y);
    \draw[gray, very thick] (\x,\y+1) -- (\x+2,\y+1);
    }
}

\foreach \x in {-5,-2,...,4}{                           %
    \draw[gray, very thick] (\x,5) -- (\x,6);
    \draw[gray, very thick] (\x+1, 5) -- (\x+1, 6);
    \draw[gray, very thick] (\x,-5) -- (\x,-6);
    \draw[gray, very thick] (\x+1,-5) -- (\x+1,-6);
    \draw[gray, very thick] (-5,\x) -- (-6,\x);
    \draw[gray, very thick] (-5,\x+1) -- (-6,\x+1);
    \draw[gray, very thick] (5,\x) -- (6,\x);
    \draw[gray, very thick] (5,\x+1) -- (6,\x+1);
}

\draw[black, very thick] (-6,0) -- (6,0);

\node  at (0.5,0.5) {$p$};

\node  at (7.5,0) {$(\p A)_+$};
\node  at (-7.5,0) {$(\p A)_-$};

\node  at (-5.5,5.5) {$2$}; 
\node  at (-3.5,5.5) {$1$}; 
\node  at (-3.5,3.5) {$4$}; 
\node  at (-5.5,3.5) {$3$}; 

\filldraw[black] (0,0) circle (7pt) node[anchor=west]{};

\foreach \x in {-5,-2,...,4}{                           % Two indices running over each
    \foreach \y in {-5,-2,...,4}{                       % node on the grid we have drawn 
    \node[draw,red,circle,inner sep=2pt,fill] at (\x,\y) {}; % Places a dot at those points
    \node[draw,red,circle,inner sep=2pt,fill] at (\x+1,\y) {}; % Places a dot at those points
    \node[draw,red,circle,inner sep=2pt,fill] at (\x+1,\y+1) {}; % Places a dot at those points
    \node[draw,red,circle,inner sep=2pt,fill] at (\x,\y+1) {}; % Places a dot at those points
    }
}

\end{tikzpicture}
\caption{ The automorphisms $w^{(g)}$ act non-trivially on the blue shaded region. The LGPs $\alpha_{K^{(g)}}$, $\ups^{(g,h)}$ act non-trivially on the green shaded region. The LGPs $\tilde{\ups}^{(g,h)}_+$ act non-trivially on the yellow shaded region.
}
\label{fig:TNstate}
\end{figure}

As was shown in \cite{chen2013symmetry}, for any representative of a class $[\omega] \in H^{3}(G,U(1))$ one can construct a tensor network state that is SPT. Let us show that our index for such state is $[\omega]$.

Let $\Lambda$ be a square lattice with sites $(x,y) \in \ZZ \times \ZZ$. Let $\CV$ be a regular representation of $G$ with a basis $\lal l |$ for $l \in G$. The on-site Hilbert space is $\CV_{(x,y)} = \bigotimes_{a=1}^{4} \CV^{(a)}_{(x,y)}$ for $\CV^{(a)}_{(x,y)} \cong \CV$. The induced basis for $\CV_{(x,y)}$ is 
\beq
\lal l_1, l_2, l_3, l_4 | := \lal l_1 |^{(1)} \otimes \lal l_2 |^{(2)} \otimes \lal l_3 |^{(3)} \otimes \lal l_4 |^{(4)}.
\eeq
The on-site action of $G$ is defined by
\beq
\lal l_1,l_2,l_3,l_4 | \CR^{(g)}  = \lal l_1 g, l_2 g, l_3 g, l_4 g | \frac{\omega(l_2 l_1^{-1}, l_1, g)\omega(l_3 l_2^{-1}, l_2, g)}{\omega(l_3 l_4^{-1}, l_4, g)\omega(l_4 l_1^{-1}, l_1, g)}
\eeq
for some a representative $\omega(g,h,k)$ of $[\omega]$. The state $\psi$ is chosen to be a tensor product of vector states with
\beq
\lal \psi_{(x-1/2,y-1/2)} | \sim \sum_{l \in G} \lal l|^{(1)}_{(x,y)} \otimes \lal l|^{(2)}_{(x-1,y)} \otimes \lal l|^{(3)}_{(x-1,y-1)} \otimes \lal l|^{(4)}_{(x,y-1)} 
\eeq
on $\CV^{(1)}_{(x,y)} \otimes \CV^{(2)}_{(x-1,y)} \otimes \CV^{(3)}_{(x-1,y-1)} \otimes \CV^{(4)}_{(x,y-1)}$. One can check that such vector is invariant under the on-site action of $G$ on a region that contains four involved sites. 

Let $A$ be an upper half-plane $y>0$, so that $w_{+}^{(g)}$ acts on sites $(x,y) \in \ZZ \times \ZZ_{>0}$. Let $\CV^{link}_x$ be a diagonal subspace of $\CV^{(1)}_{(x-1,0)} \otimes \CV^{(2)}_{(x,0)}$. Let $ K''^{(g)}_x$ be commuting self-adjoint observables localized on $\CV^{link}_{x} \otimes \CV^{link}_{x+1}$ such that for basis vectors
\beq
\lal l_{x}| \otimes \lal l_{x+1}| e^{i K''^{(g)}_x} = \lal l_{x}| \otimes \lal l_{x+1}| \omega(l_{x} l_{x+1}^{-1}, l_{x+1}, g)
\eeq
Let $ K'^{(g)}_x$ be self-adjoint observables localized on $\CV^{link}_{x}$ such that for basis vectors
\beq
\lal l_{x}| e^{i K'^{(g)}_x} = \lal l_{x} g|
\eeq
We can take $\alpha_{K^{(g)}} = \alpha_{\sum_x K''^{(g)}_x} \circ \alpha_{\sum_x K'^{(g)}_x}$.
Then an LGP
\beq
\ups^{(g,h)} = \alpha_{K^{(g)}} \circ \sfw^{(g)}(\alpha_{K^{(h)}}) \circ (\alpha_{K^{(gh)}})^{-1} = \alpha_{K^{(g)}} \circ \alpha_{K^{(h)}} \circ (\alpha_{K^{(gh)}})^{-1}
\eeq
gives an LGA that can be described as a conjugation with unitaries $e^{iQ'^{(g,h)}_x}$ such that for basis vectors
\beq
\lal l_{x}| \otimes \lal l_{x+1} | e^{i Q'^{(g,h)}_x}  = \lal l_{x}| \otimes \lal l_{x+1} | \frac{\omega(l_{x},g,h)}{\omega(l_{x+1},g,h)}
\eeq
where we have used
\beq
\frac{\omega(l_{x} l_{x+1}^{-1}, l_{x+1} g, h) \omega(l_{x} l_{x+1}^{-1}, l_{x+1}, g)}{\omega(l_{x} l_{x+1}^{-1}, l_{x+1}, g h)} = \frac{\omega(l_{x},g,h)}{\omega(l_{x+1},g,h)}.
\eeq
We can take 
\beq 
\ups^{(g,h)}_+ := \alpha_{K^{(g)}_+} \circ \alpha_{K^{(h)}_+} \circ (\alpha_{K^{(gh)}_+})^{-1}
\eeq
where $\alpha_{K^{(g)}_+} = \alpha_{\sum_{x\geq 0} K''^{(g)}_x} \circ \alpha_{\sum_{x\geq0} K'^{(g)}_x}$, and define the corresponding $\alpha_{N^{(g,h)}}$ local on $\CV^{link}_0$ by
\beq
\lal l_0| e^{i N^{(g,h)}}  = \lal l_0| \omega(l_0,g,h)^{-1}.
\eeq
Finally, using
\beq
\frac{\omega(l_0 g,h,k) \omega(l_0,g,hk)}{\omega(l_0,g,h) \omega(l_0,gh,k)} = \omega(g,h,k)
\eeq
we get $\iota^{(g,h,k)}$ with the corresponding unitary $\CI^{(g,h,k)} = \omega(g,h,k)$, that gives
\beq
\lal \CI^{(g,h,k)} \ral_{\psi} = \omega(g,h,k).
\eeq
\\

\noindent
{\bf Acknowledgements:}
I would like to thank Anton Kapustin for many inspiring discussions and comments on the draft. This research was supported in part by the U.S.\ Department of Energy, Office of Science, Office of High Energy Physics, under Award Number DE-SC0011632. 
\\

\noindent
{\bf Data availability statement:}
Data sharing is not applicable to this article as no new data were created or analyzed in this study.

\appendix
\numberwithin{equation}{section}

\section{Proof of Lemma \ref{lma:1ddefect}}\label{app:1ddefect}

Since the proof of Lemma \ref{lma:1ddefect} is essentially the same as the proof that all invertible states on a one-dimensional lattice are stably SRE from Section 4 of \cite{KSY}, we just sketch the differences, assuming the reader is familiar with this proof and the terminology. 

% All lemmas mentioned below are from Section 4 of \cite{KSY}.

We also assume that $A$ is a half-plane $y>0$ for simplicity of the exposition. It is clear from the proof that for a cone-like region all the arguments below can be generalized.

\begin{lemma} \label{lma:almfac}
Let $\psi$ be a pure state which is $f$-close on $\RR^2$ far from $j$ to a pure factorized state $\psi_0$ for some MDP function $f(r)=\Or$. Then $\psi$ and $\psi_0$ are unitarily equivalent and one can be produced from the other by a conjugation with $e^{i \CG}$, where $\CG$ is an almost local self-adjoint observable $g$-localized at $j$ and bounded $\|\CG\|\leq C$  for some $g(r)$ and $C$ which only depend on $f(r)$.
\end{lemma}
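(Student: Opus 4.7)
\emph{Plan.} I would construct $\CG$ as a norm-convergent sum $\CG = \sum_n \CG_n$, with each $\CG_n$ self-adjoint in $\SA_{\Gamma_n(j)}$, $\|\CG_n\| = \Or$, and essentially concentrated on the annulus $A_n := \Gamma_n(j)\setminus\Gamma_{n-1}(j)$. The partial products $U_n := e^{i\CG_n}\cdots e^{i\CG_1} \in \SA_{\Gamma_n(j)}$ are inductively arranged to satisfy $\|\Ad_{U_n}(\psi_0)|_{\Gamma_n(j)} - \psi|_{\Gamma_n(j)}\| = \Or$, so that the norm limit $e^{i\CG}$ intertwines $\psi_0$ and $\psi$, giving unitary equivalence.

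\emph{Approximate factorization.} For each $n$, the $f$-closeness makes $\psi|_{\bar\Gamma_n(j)}$ within $f(n)$ of the pure factorized $\psi_0|_{\bar\Gamma_n(j)}$, hence approximately pure. Combined with purity of the full state $\psi$, the standard purification/Schmidt-rank argument used in Section 4 of \cite{KSY} yields $\|\psi - \psi|_{\Gamma_n(j)}\otimes\psi|_{\bar\Gamma_n(j)}\| = \Or$ together with approximate purity of $\psi|_{\Gamma_n(j)}$; in particular $\psi|_{A_{n+1}}$ is $\Or$-close to the factorized $\psi_0|_{A_{n+1}}$.

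\emph{Induction and the main obstacle.} Given $U_n$ satisfying the hypothesis, the two states $\Ad_{U_n}(\psi_0)|_{\Gamma_{n+1}(j)}$ and $\psi|_{\Gamma_{n+1}(j)}$ are (approximately) pure states on the finite-dimensional matrix algebra $\SA_{\Gamma_{n+1}(j)}$ differing in norm by $\Or$, so there is a unitary $V_{n+1} \in \SA_{\Gamma_{n+1}(j)}$ with $\|V_{n+1} - 1\| = \Or$ intertwining them. The key technical obstacle is to arrange that $V_{n+1}$ is almost-locally supported on the annulus $A_{n+1}$ rather than the whole disk: since $V_{n+1}$ already approximately stabilizes $\Ad_{U_n}(\psi_0)|_{\Gamma_{n-1}(j)}$ up to $\Or$ error, one factors $V_{n+1} = V'_{n+1}\cdot W$ with $W \in \SA_{\Gamma_{n-1}(j)}$ a stabilizer of the pure factorized state $\Ad_{U_n}(\psi_0)|_{\Gamma_{n-1}(j)}$ and $V'_{n+1}$ supported in an annular neighborhood of $A_{n+1}$ with $\|V'_{n+1}-1\|=\Or$. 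Because $\Ad_W$ fixes that restriction, replacing $V_{n+1}$ by $V'_{n+1}$ does not degrade the approximation. This localization step follows the same telescoping disentanglement procedure as in Section 4 of \cite{KSY}, with nested disks around $j$ playing the role of nested 1d intervals.

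\emph{Assembly.} Setting $\CG_{n+1} := -i\log V'_{n+1}$ (well-defined since $\|V'_{n+1}-1\|=\Or<1$), the sum $\CG := \sum_n\CG_n$ converges in norm with $\|\CG\| \leq C := \sum_n\|\CG_n\| < \infty$ by summability of $\Or$. For any $\CB\in\SA_k$ with $|j-k|=r$, only terms with $n\geq r$ fail to commute with $\CB$, giving $\|[\CG,\CB]\| \leq 2\|\CB\|\sum_{n\geq r}\|\CG_n\| = \|\CB\|\cdot\Or$, i.e. $\CG$ is almost-locally $g$-localized at $j$ with $g$ depending only on $f$. The partial products $e^{i\CG_n}\cdots e^{i\CG_1}$ converge in norm to $e^{i\CG}$, which gives $\Ad_{e^{i\CG}}(\psi_0) = \psi$.
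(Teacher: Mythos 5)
Your overall strategy --- telescoping over nested disks, using approximate purity of $\psi|_{\bar{\Gamma}_n(j)}$ to get approximate factorization across each cut, and correcting the state on $\Gamma_{n+1}(j)$ by a unitary $V_{n+1}$ with $\|V_{n+1}-1\|=\On$ --- is the right one, and it is in the same spirit as the proof the paper defers to verbatim (Lemma 4.1 of \cite{KSY}, with intervals replaced by disks). The induction does close as you describe. Note, however, that the ``main obstacle'' you identify is not actually an obstacle: since $V_{n+1}\in\SA_{\Gamma_{n+1}(j)}$ and $\|V_{n+1}-1\|=\On$, the ordered product $U=\lim_n V_n\cdots V_1$ already satisfies $\|[U,\CB]\|\leq 2\|\CB\|\sum_{n\geq r-1}\|V_n-1\|=\Or\,\|\CB\|$ for $\CB\in\SA_k$ with $|j-k|=r$, so no localization onto annuli is needed; and the factorization $V_{n+1}=V'_{n+1}W$ with $W$ a stabilizer is asserted rather than proved (a unitary that approximately preserves the restriction of a state to a subregion need not split this way without further argument).

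The genuine gap is in the assembly. The partial products $e^{i\CG_n}\cdots e^{i\CG_1}$ do \emph{not} converge to $e^{i\CG}$ for $\CG=\sum_n\CG_n$: the exponential of a sum is not the product of exponentials for non-commuting generators, and the annular neighborhoods carrying consecutive $\CG_n$ overlap, so the $\CG_n$ do not commute. Your construction therefore yields an almost local unitary $U$ with $\Ad_U(\psi_0)=\psi$, but not the single self-adjoint, bounded, almost local $\CG$ with $\Ad_{e^{i\CG}}(\psi_0)=\psi$ that the lemma requires; an almost local unitary does not automatically admit an almost local logarithm of controlled norm. (A smaller issue: $\Or$ controls only the tail, so $\|V'_{n+1}-1\|<1$ may fail for small $n$ and the principal logarithm is not automatically defined there.) The way to repair this --- and the way the one-dimensional argument the paper cites proceeds --- is to avoid composing corrections altogether: take $W_n=e^{iG_n}$ to be the canonical rotation in the finite-dimensional algebra $\SA_{\Gamma_n(j)}$ carrying the factorized vector of $\psi_0|_{\Gamma_n(j)}$ to the dominant Schmidt vector $\chi_n$ of $\psi|_{\Gamma_n(j)}$, acting trivially on the orthogonal complement of their span. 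Then $\|G_n\|\leq\pi/2$, $G_{n+1}-G_n\in\SA_{\Gamma_{n+1}(j)}$ with $\|G_{n+1}-G_n\|=\On$ because $\chi_{n+1}$ is $\On$-close to $\chi_n$ tensored with the factorized vector on the annulus, and $\CG=\lim_n G_n$ is a single bounded almost local generator with $e^{iG_n}\to e^{i\CG}$ in norm, which intertwines $\psi_0$ and $\psi$ as required.
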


\begin{proof}
The proof word to word repeats the proof of Lemma 4.1 from \cite{KSY} with  $\Gamma_r(j)$ now being the disk of radius $r$ with the center at $j$.
\end{proof}

\begin{lemma} \label{lma:schmidt}
Let $\psi$ be a state on a cone $C$ which is $f$-close far from the origin of $C$ to a pure factorized state $\psi_0$. Then its density matrix (in the GNS Hilbert space of this factorized state) has eigenvalues with $g(r)$-decay for some $g(r)=\Or$ that depends only on $f(r)$. Conversely, for any density matrix on a cone $C$ (in the GNS Hilbert space of a pure factorized state) whose eigenvalues have $g(r)$-decay there is a state on that cone which has the same eigenvalues and is $f$-close far from the origin of $C$ to this pure factorized state. Furthermore, one can choose $f(r)$ so that it depends only on $g(r)$. 
\end{lemma}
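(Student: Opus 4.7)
The plan is to mirror the one-dimensional argument of \cite{KSY} (the analogue of this lemma for states on half-lines) with the interval $[0,r]$ replaced by the disk $C \cap \Gamma_r(o)$, where $o$ denotes the origin of the cone $C$. The key structural fact that makes this work is that the portion of $C$ sitting outside a disk of radius $r$ around $o$ is exactly where $\psi$ is $f(r)$-close to the factorized state $\psi_0$.

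For the forward direction, I would work in the GNS representation of $\psi_0|_C$, which is a pure factorized state on the cone, and denote by $\rho$ the density matrix of $\psi|_C$. For each $r$, split $C$ into the inner region $C_r := C \cap \Gamma_r(o)$ and its complement $C_r^c$ inside $C$. The hypothesis gives $|\psi(\CA) - \psi_0(\CA)| \leq \|\CA\| f(r)$ for every $\CA \in \SA_{C_r^c}$, hence the reduced density matrices $\rho|_{C_r^c}$ and $\rho_0|_{C_r^c}$ differ by at most $2f(r)$ in trace norm. Since $\rho_0|_{C_r^c}$ is rank one (the factorized state is pure on every subregion), a standard eigenvalue-perturbation estimate (Weyl's inequality in trace norm) shows that $\rho|_{C_r^c}$ has only one eigenvalue close to $1$ and all others bounded by $O(f(r))$. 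By the Schmidt decomposition of $\rho$ across the cut $C_r \mid C_r^c$, the eigenvalues of $\rho|_{C_r}$ that exceed a threshold $\sim f(r)$ are in bijection with the large eigenvalues of $\rho|_{C_r^c}$. Arranging this quantitatively as $r$ varies (take $r = r(n)$ to be the smallest $r$ at which the number of eigenvalues of $\rho$ greater than some threshold exceeds $n$, exactly as in Section~4 of \cite{KSY}) yields a decay function $g(r) = \Or$ that depends only on $f$.

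For the converse, start from a density matrix $\rho$ on $C$ (in the GNS Hilbert space of a pure factorized state $\psi_0$) whose eigenvalues $\{p_k\}$ satisfy $p_k \leq g(\phi(k))$ for a reordering-rate function $\phi$ that makes $g(r)=\Or$ meaningful. Truncate to the top $N(r)$ eigenvectors, which define a finite-rank state whose support can be localized, by the cyclicity of the GNS vacuum vector on the factorized state, to observables living on $C_r$. Choose $N(r)$ large enough that the truncation error is bounded by $g$ and small enough that the supporting region stays inside $\Gamma_{r}(o)$. The resulting state is pure factorized on $C_r^c$, and perturbing by the tail of the spectrum introduces a discrepancy on observables in $\SA_{C \cap \bar{\Gamma}_r(o)}$ of order $\sum_{k>N(r)} p_k$, which gives the desired $f$-closeness with $f$ constructed from $g$ alone.

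The main obstacle is the bookkeeping that turns pointwise eigenvalue bounds into a single monotonically decreasing positive decay function depending only on $f$ (resp.\ $g$). This is handled in the 1d argument of \cite{KSY} by defining $g$ as a suitable rearrangement/envelope of the ordered sequence of truncation errors; the same construction goes through verbatim once the cone-disk geometry replaces the half-line-interval geometry, because the only property of the 1d geometry used is that the complement of the distinguished finite region within the chain is a single connected piece on which $\psi$ is close to the factorized state. The cone has this property too (the annular region $C \cap \bar\Gamma_r(o)$ is where the closeness hypothesis applies), so no genuinely new analytic input is required beyond the 1d proof.
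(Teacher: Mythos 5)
Your proposal matches the paper's own proof, which simply asserts that the argument of Lemma 4.4 of \cite{KSY} carries over word for word once intervals are replaced by the intersections of the cone with disks $\Gamma_r$ around its origin --- exactly the substitution you make, and for exactly the reason you give (the only geometric input is that the closeness hypothesis holds on the complement of a finite region whose algebra has controlled dimension). One small imprecision: since $\psi$ need not be pure, the step invoking "the Schmidt decomposition of $\rho$ across the cut $C_r \mid C_r^c$" should instead be phrased as projecting $\rho$ onto the dominant eigenvector $P$ of $\rho|_{C_r^c}$, using $\Tr\bigl(\rho\,(1\otimes(1-P))\bigr)\leq O(f(r))$ to obtain an approximation of $\rho$ of rank at most $\dim \SA_{C\cap\Gamma_r}$, which gives the same tail bound on the eigenvalues of $\rho$ itself.
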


\begin{proof}
The proof word to word repeats the proof of Lemma 4.4 from \cite{KSY}.
\end{proof}

Let $I_{n,m} := \{(x,y) \in [n,m) \times [0,+\infty)\}$, $I'_{n,m} := \{(x,y) \in [n,m) \times (-\infty,0)\}$, and let $C^{-}_n : = I_{-\infty,n}$, $C^{+}_n : = I_{n,+\infty}$, $C'^{-}_n : = I'_{-\infty,n}$, $C'^{+}_n : = I'_{n,+\infty}$, $B^{\pm}_n = C^{\pm}_n \cup C'^{\pm}_n$. We further split cones $C^{\pm}_n$ into two cones $D^{\pm}_{n}$, $E^{\pm}_{n}$ (and similarly for $C'^{\pm}_n$) as shown on Fig. \ref{fig:fig1}.

Let $\psi = \alpha_{Q|_A}^{-1}(\psi_0)$, $\psi' = \alpha_{Q|_{\bar{A}}}^{-1}(\psi_0)$ which coincide with $\psi_0$ on $\bar{A}$ and $A$, respectively, and therefore the state $\Psi := \psi|_A \otimes \psi'|_{\bar{A}}$ is pure. By Lemma \ref{lma:decomp} the state $\Psi$ is LGA equivalent to $\psi_0$ by an element of $\SG_{\p A}$. Let $\alpha_F \in \SG_{\p A}$ be an LGP, such that $\alpha_F(\Psi) = \psi_0$ and let $f(r)$ be a function, such that all $F_j$ and all $Q_j$ are $f$-localized.

Since $(Q_A - Q|_A) \in \SC^0_{\p A}$, by Lemma \ref{lma:fafa} $\psi=\alpha_{Q_A}^{-1}(\psi_0)$ and $\alpha_{Q|_A}^{-1}(\psi_0)$ are LGA equivalent by an element of $\SG_{\p A}$. Therefore it is enough to show that $\psi$ is SRE defect state.

We say that $\phi^{\pm}_n$ is a truncation of $\psi$ if it is a pure state, such that $\phi^{\pm}_n|_{D^{\pm}_n} = \psi|_{D^{\pm}_n}$, $\phi^{\pm}_n|_{\bar{C}^{\pm}_n} = \psi_0|_{\bar{C}^{\pm}_n}$ and with $\phi^{\pm}_n|_{E^{\pm}_n}$ being $h$-close to $\psi_0$ on $E^{\pm}_n$ far from $(n,0)$ for some MDP function $h(r)=\Or$. Similarly, we define truncations $\phi'^{\pm}_n$ of $\psi'$ (see Fig. \ref{fig:fig1}).

\begin{lemma}
The truncations $\phi^{\pm}_n$ ($\phi'^{\pm}_n$) of $\psi$ ($\psi'$) exist with $h(r)$ depending only on $f(r)$.
\end{lemma}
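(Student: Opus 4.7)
The plan is to construct the truncations $\phi^{\pm}_n$ (and similarly $\phi'^{\pm}_n$) by adapting the one-dimensional truncation argument of Section~4 of \cite{KSY} to the two-dimensional cone geometry. I will focus on $\phi^+_n$; the other three cases are completely analogous.

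The starting point is that $\psi|_A = \Psi|_A$ with $\Psi = \alpha_F^{-1}(\psi_0)$ for some $\alpha_F \in \SG_{\p A}$, $F \in \SC^0_{\p A}$, whose decay is controlled by $f$. The MDP decay of $F$ away from $\p A$ implies that $\psi$ is $f_1$-close to $\psi_0$ on any region bounded away from $\p A$, with $f_1(r)=\Or$ depending only on $f$. By the choice of the split $C^+_n = D^+_n \sqcup E^+_n$ in Figure \ref{fig:fig1}, points of $E^+_n$ at distance $r$ from $(n,0)$ lie at distance at least $c \cdot r$ from $\p A$ for some fixed $c>0$, so $\psi|_{E^+_n}$ is $f_1$-close to $\psi_0$ far from $(n,0)$.

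Next, I would apply Lemma \ref{lma:schmidt} to $\psi|_{C^+_n}$: the density matrix in the GNS representation of $\psi_0|_{C^+_n}$ has eigenvalues with $g$-decay for some $g(r)=\Or$ depending only on $f$. I then set $\phi^+_n := \psi_0|_{\bar{C}^+_n} \otimes \zeta$, where $\zeta$ is the pure state on $C^+_n$ obtained by purifying $\psi|_{D^+_n}$ as follows: diagonalize $\psi|_{D^+_n}$ and pair its eigenvectors with vectors on $E^+_n$ that are close to the $\psi_0|_{E^+_n}$ vacuum, as provided by the converse direction of Lemma \ref{lma:schmidt} applied to the $g$-decaying spectrum. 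By construction $\phi^+_n$ is pure with $\phi^+_n|_{D^+_n}=\psi|_{D^+_n}$, $\phi^+_n|_{\bar{C}^+_n}=\psi_0|_{\bar{C}^+_n}$, and $\phi^+_n|_{E^+_n}$ is $h$-close to $\psi_0|_{E^+_n}$ far from $(n,0)$ with $h$ depending only on $g$, and therefore only on $f$.

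The main obstacle is verifying that the purification $\zeta$ can be produced with the claimed uniform decay bound. This is the two-dimensional analog of the truncation argument in \cite{KSY}: the half-line $(-\infty,n]$ is replaced by the cone $C^+_n$ and the endpoint is replaced by the corner $(n,0)$. The key inputs --- both directions of Lemma \ref{lma:schmidt} and the closeness of $\psi$ to $\psi_0$ on cones bounded away from $\p A$ --- carry over unchanged modulo this geometric adaptation, so the argument is essentially the one already given in \cite{KSY}.
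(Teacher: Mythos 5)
Your overall architecture --- establish eigenvalue decay for the reduced density matrix of $\psi$ on the cone adjacent to the defect, then invoke the converse direction of Lemma \ref{lma:schmidt} to purify it into $E^{\pm}_n$ by a state close to $\psi_0$ --- matches the paper, and the final purification step is essentially correct. The gap is in how you obtain the eigenvalue decay, which is the crux. You apply the forward direction of Lemma \ref{lma:schmidt} directly to $\psi|_{C^{+}_n}$, but its hypothesis requires the state to be close to $\psi_0$ on the \emph{entire} cone far from the apex, whereas $C^{+}_n$ (and already $D^{+}_n$) contains the defect ray $\{y=0,\;x\geq n\}\subset\p A$, along which $\psi$ differs from $\psi_0$ by an amount that does not decay with the distance from $(n,0)$. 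Your own first observation --- that closeness to $\psi_0$ holds only on regions bounded away from $\p A$ --- is precisely what rules out this application. The paper instead forms the glued pure state $\Psi=\psi|_A\otimes\psi'|_{\bar A}$, which is carried to $\psi_0$ by an LGP $\alpha_F$ with $F\in\SC^0_{\p A}$, takes the cone $D$ straddling the defect (the union of $D^{\pm}_n$ with its mirror image below the axis), and exploits purity across the bipartition $D\,|\,\bar D$: the pure state $\alpha_{F|_{\bar D}}(\Psi)$ coincides with $\Psi$ on $D$ while being close to $\psi_0$ on $\bar D$ far from the corner (here Lemma \ref{lma:decomp} and the fact that $F$ is localized on the one-dimensional set $\p A$ are essential to confine the mismatch to the corner rather than to all of $\p D$). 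Lemma \ref{lma:schmidt} is then applied on $\bar D$, and the Schmidt decomposition of a pure state transfers the decay to $\Psi|_D$ and hence to $\psi|_{D^{\pm}_n}$. Nothing in your write-up supplies this transfer, and without it the $g$-decay of the spectrum of $\psi|_{D^{+}_n}$ is unsupported.

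A second, related omission: before speaking of ``the density matrix of $\psi|_{D^{+}_n}$ in the GNS representation of $\psi_0$'' one must know that $\Psi$ has the split property along $\p D$, i.e.\ is quasi-equivalent to $\Psi|_{D}\otimes\Psi|_{\bar D}$; the paper establishes this by comparing $(\alpha_{F|_D}\circ\alpha_{F|_{\bar D}})(\Psi)$ with $\psi_0$ via an almost local unitary at the corner. You assume this normality silently. Both points must be added before the converse direction of Lemma \ref{lma:schmidt} can legitimately be invoked to produce $\phi^{+}_n$.
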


\begin{proof}
Clearly, it is enough to show that for $\phi^{-}_0$ and $\psi$.

Let $D = D^{-}_0 \cup D'^{-}_0$. First, let us show that the state $\Psi$ has the split property\footnote{that was introduced for 1d systems in \cite{Matsui}} along the cut $\p D$, i.e. $\Psi$ is quasi-equivalent to $\Psi|_{D} \otimes \Psi|_{\bar{D}}$. Let $\tilde{\Psi} := (\alpha_{F|_D} \circ \alpha_{F|_{\bar{D}}}) (\Psi)$ which is by Lemma \ref{lma:decomp} related to $\psi_0$ by an almost local at $(0,0)$ unitary observable. By Cor. 2.6.11 from \cite{bratteli2012operator} both states $\tilde{\Psi}$ and $\tilde{\Psi}|_{D} \otimes \tilde{\Psi}|_{\bar{D}}$ are quasi-equivalent to $\psi_0$, and therefore are quasi-equivalent to each other. Since LGAs preserve quasi-equivalence, it implies quasi-equivalence of $\Psi$ and $\Psi|_{D} \otimes \Psi|_{\bar{D}}$. Thus, the GNS Hilbert space associated with $\Psi$ factorizes into the tensor product of Hilbert spaces for cones $D$ and $\bar{D}$, and $\Psi|_{D}$ and $\Psi|_{\bar{D}}$ can be described by density matrices in these Hilbert spaces.

Since $\Psi$ and $\alpha_{F|_{\bar{D}}}(\Psi)$ are pure and coincide on $D$, and since $\alpha_{F|_{\bar{D}}}(\Psi)|_{\bar{D}}$ is close to a factorized state $\psi_0$ on $D$ far from (0,0), Lemma \ref{lma:schmidt} implies that the density matrix of $\Psi|_D$ has $g(r)$-decay of Schmidt coefficients for some $g(r)$ that depends on $f(r)$ only. Since $\Psi = \psi|_{A} \otimes \psi'|_{\bar{A}}$, the state $\psi|_{D^{-}_0}$ is described by a density matrix with $g(r)$-decay of Schmidt coefficients, and therefore by Lemma \ref{lma:schmidt} it can be purified by a state on $D^{-}_0 \cup E^{-}_0$ which is $h(r)$-close on $E^{-}_0$ to $\psi_0$ far from (0,0) for some $h(r)$ that depends on $f(r)$ only, that guarantees the existence of $\phi^{-}_0$. 

\end{proof}

\begin{lemma} \label{lma:almfacgluing}
If a state is $g$-close to a pure factorized state $\psi_0$ both on a cone $C_1$ and on a cone $C_2$ with the same origin $j$, then it is $h$-close to $\psi_0$ on $C_1 \cup C_2$ for some MDP function $h(r)$ that depends only on $g(r)$.
\end{lemma}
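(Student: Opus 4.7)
The plan is to exploit the purity and factorization of $\psi_0$ through a standard extremality-of-pure-states estimate: if the marginal of a state on one factor of a tensor product is close to a pure state, the state itself must be close to a product. Combining this with the one-sided closeness on $C_2$ will give the two-sided closeness on $C_1 \cup C_2$.

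First I would decompose the problem disjointly. Set $D_1 := C_1$ and $D_2 := C_2 \setminus C_1$, so that $D_1, D_2$ are disjoint lattice subsets with $D_1 \cup D_2 = C_1 \cup C_2$. Since $D_2 \subset C_2$, the hypothesis implies $g$-closeness of $\psi$ to $\psi_0$ on $D_2$ far from $j$ as well, and likewise on $D_1$. By disjointness, $\SA_{(C_1 \cup C_2) \setminus \Gamma_r(j)} = \SA_{D_1 \setminus \Gamma_r(j)} \otimes \SA_{D_2 \setminus \Gamma_r(j)}$, and the restriction of $\psi_0$ is a product state $\rho_1 \otimes \rho_2$ with each $\rho_i$ pure (being a restriction of a pure factorized state).

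Next I would prove the following auxiliary statement: if $\phi$ is a state on a tensor product $\mathcal{M}_1 \otimes \mathcal{M}_2$ of $C^*$-algebras, $\rho_1$ is pure on $\mathcal{M}_1$, and $\|\phi|_{\mathcal{M}_1} - \rho_1\| \leq \epsilon$ in the functional norm, then $\|\phi - \rho_1 \otimes \phi|_{\mathcal{M}_2}\| \leq C\sqrt{\epsilon}$ for a universal constant $C$. By density of finitely supported observables, this reduces to the finite-dimensional case of density matrices $\varrho$ and $|v\rangle\langle v|$. Setting $P := |v\rangle\langle v| \otimes I$, the fidelity bound $\Tr(P \varrho) \geq 1 - \epsilon$, together with a Cauchy--Schwarz estimate on the off-diagonal blocks $P\varrho(I-P)$, gives $\|\varrho - P\varrho P\|_1 \leq 3\sqrt{\epsilon}$; and $P\varrho P = |v\rangle\langle v| \otimes \sigma$ where $\sigma$ differs from $\phi|_{\mathcal{M}_2}$ in trace norm by at most $\epsilon$, so the triangle inequality closes the argument. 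Applying this with $\mathcal{M}_i := \SA_{D_i \setminus \Gamma_r(j)}$, $\rho_i := \psi_0|_{D_i \setminus \Gamma_r(j)}$, and $\epsilon := g(r)$ yields $\|\psi - \rho_1 \otimes \psi|_{\mathcal{M}_2}\| \leq C\sqrt{g(r)}$ on the combined algebra; then using the $D_2$-hypothesis to replace $\psi|_{\mathcal{M}_2}$ by $\rho_2$ at the cost of an additional $g(r)$, the triangle inequality gives $|\psi(\CA) - \psi_0(\CA)| \leq (C\sqrt{g(r)} + g(r))\|\CA\|$ for all $\CA \in \SA_{(C_1 \cup C_2) \setminus \Gamma_r(j)}$. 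Setting $h(r) := C\sqrt{g(r)} + g(r)$ gives an MDP function of order $\Or$ depending only on $g$, as required.

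The main obstacle is the extremality estimate. While the finite-dimensional ``approximately pure implies approximately product'' statement is textbook, its careful $C^*$-algebraic formulation relies on a density reduction to finitely supported observables (so that we may legitimately pass to marginal density matrices) together with the Fuchs--van de Graaf style fidelity/trace-distance conversions; the rest of the argument is then a straightforward triangle inequality.
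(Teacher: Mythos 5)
Your argument is correct, and it is essentially the intended one: the paper's ``proof'' is only a pointer to Lemma~4.3 of \cite{KSY}, whose mechanism is exactly the extremality estimate you prove (a state whose marginal is $\epsilon$-close to a pure state is $O(\sqrt{\epsilon})$-close to the product of that pure state with its other marginal), combined with the triangle inequality on the disjoint decomposition $C_1 \sqcup (C_2\setminus C_1)$. Your finite-dimensional reduction and the $P\varrho P$ estimate are sound (note the cross terms $\Tr_1\bigl(P\varrho(1-P)\bigr)$ even vanish identically, which is why your ``at most $\epsilon$'' bound on $\|\sigma-\phi|_{\mathcal M_2}\|_1$ holds), and $\sqrt{g(r)}$ is again $\Or$, so $h$ depends only on $g$ as required.
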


\begin{proof}
The proof is the same as the proof of Lemma 4.3 from \cite{KSY} with a replacement of intervals $(n,+\infty)$ and $(-n,-\infty)$ by $C \cup \bar{\Gamma}_n(j)$ and $\bar{C} \cup \bar{\Gamma}_n(j)$, respectively.
\end{proof}

Let $\tilde{\psi}_n := \phi^{-}_n|_{B^{-}_n} \otimes \phi^{+}_n|_{B^{+}_n}$ and $\tilde{\psi}'_n := \phi'^{-}_n|_{B^{-}_n} \otimes \phi'^{+}_n|_{B^{+}_n}$.

\begin{lemma} \label{lma:A2}
States $\psi$ and $\tilde{\psi}_n$ (or $\psi'$ and $\tilde{\psi}'_n$) are related by an almost local at $(n,0)$ unitary $\CU \in \SA_{A}$ with localization depending on $f(r)$ only.
\end{lemma}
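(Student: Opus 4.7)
The plan is to reduce the claim to Lemma \ref{lma:almfac} by conjugating both states with the LGA $\alpha_{Q|_A}$, which by construction sends $\psi$ to $\psi_0$. Both $\psi$ and $\tilde{\psi}_n$ are pure, agree with $\psi_0$ on $\bar A$, and factorize across $\p A$ (the former because $\alpha_{Q|_A}$ acts trivially on $\SA_{\bar A}$, the latter by definition of $\tilde{\psi}_n$), so the claim is equivalent to finding an almost local unitary $\CU \in \SA_A$ at $(n,0)$ with $\Ad_\CU(\psi|_A) = \tilde{\psi}_n|_A$, where both restricted states are pure on $\SA_A$.

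Define the pure state $\chi := \alpha_{Q|_A}(\tilde{\psi}_n|_A)$ on $\SA_A$. If $\chi$ is $g$-close to $\psi_0|_A$ on $A$ far from $(n,0)$ for some MDP $g(r)=\Or$, then Lemma \ref{lma:almfac} (applied within $\SA_A$; the proof from \cite{KSY} adapts verbatim to the half-plane sublattice) yields $\CG \in \SA_A$ almost local at $(n,0)$ with $\Ad_{e^{i\CG}}(\psi_0|_A) = \chi$. Setting $\CU := \alpha_{Q|_A}^{-1}(e^{i\CG})$ then gives $\Ad_\CU(\psi|_A) = \tilde{\psi}_n|_A$, and by the Lieb-Robinson-type bound on LGAs (Lemma A.2 of \cite{kapustin2020hall}) $\CU$ remains almost local at $(n,0)$ in $\SA_A$ with the decay depending only on $f(r)$.

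To verify the closeness, analyze $\chi(\CA) = \tilde{\psi}_n(\alpha_{Q|_A}(\CA))$ cone by cone with origin $(n,0)$, for $\CA \in \SA_A$ supported at distance $r$ from $(n,0)$. For $\CA$ on the bulk cone $D^\pm_n$, one has $\tilde{\psi}_n = \psi$ there, and the leak of $\alpha_{Q|_A}(\CA)$ into $E^\pm_n$ is $\|\CA\|\,\Or$ by the $f$-decay of $Q$, giving $\chi(\CA) = \psi(\alpha_{Q|_A}(\CA)) + \|\CA\|\,\Or = \psi_0(\CA) + \|\CA\|\,\Or$. For $\CA$ on $E^\pm_n$ far from $(n,0)$, this region extends into the bulk of $A$ away from $\p A$; since $Q|_A$ coincides with $Q$ far from $\p A$ up to $\Or$ tails and $\alpha_Q(\psi_0) = \psi_0$, the pulled-back state $\alpha_{Q|_A}(\psi_0)$ is $\Or$-close to $\psi_0$ there, and combined with the $h$-closeness of $\tilde{\psi}_n$ to $\psi_0$ (from the truncation construction) one obtains $\chi(\CA) \approx \psi_0(\alpha_{Q|_A}(\CA)) = [\alpha_{Q|_A}(\psi_0)](\CA) \approx \psi_0(\CA)$. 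Iterating Lemma \ref{lma:almfacgluing} over the cones $D^\pm_n$ and $E^\pm_n$ patches the per-cone estimates into the required global $\Or$-closeness, and the case for $\psi'$ and $\tilde{\psi}'_n$ follows by the symmetric argument under $A \leftrightarrow \bar A$ and $Q|_A \leftrightarrow Q|_{\bar A}$.

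The main obstacle is the estimate on $E^\pm_n$ far from $(n,0)$: it couples the truncation-based closeness of $\tilde{\psi}_n$ to $\psi_0$ with the bulk near-invariance $\alpha_{Q|_A}(\psi_0) \approx \psi_0$ away from $\p A$, and requires careful bookkeeping of the $\Or$ decay rates as the Lieb-Robinson spreading of $\alpha_{Q|_A}(\CA)$ propagates through both states.
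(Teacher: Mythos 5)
Your proof is correct in outline but follows a somewhat different route from the paper's. The paper does not work intrinsically on $\SA_A$: it lifts to the full plane, forms the pure states $\Psi=\psi|_A\otimes\psi'|_{\bar A}$ and $\tilde{\Psi}_n=\tilde{\psi}_n|_A\otimes\psi'|_{\bar A}$, and conjugates both by $\alpha_{F|_{\bar E_n}}$, where $\alpha_F\in\SG_{\p A}$ is the boundary-localized disentangler with $\alpha_F(\Psi)=\psi_0$. Since $F-F|_{\bar E_n}$ is approximately localized near $(n,0)$, this makes $\alpha_{F|_{\bar E_n}}(\Psi)$ close to $\psi_0$ essentially for free, while $\alpha_{F|_{\bar E_n}}(\tilde\Psi_n)$ is handled cone by cone and glued via Lemma \ref{lma:almfacgluing}; Lemma \ref{lma:almfac} is then applied on the full plane, and only at the very end is the unitary pushed into $\SA_A$ by an argument using $\alpha_{Q|_A}$ and coincidence with $\psi_0$ on $\bar A$. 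Your version replaces $\alpha_{F|_{\bar E_n}}$ by $\alpha_{Q|_A}$ and never introduces $\psi'$, $\Psi$ or $F$; in exchange you must accept that Lemmas \ref{lma:almfac} and \ref{lma:almfacgluing} hold verbatim for the half-plane sublattice (reasonable, since the KSY proofs are lattice-agnostic, but an extra adaptation the paper avoids), while you get the membership $\CU\in\SA_A$ for free from the construction rather than by a separate purity argument. Both routes rest on exactly the same two lemmas, so I would call this a legitimate streamlining rather than a new idea.

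One step in your sketch is too glib: the claim that for $\CA$ supported in $D^\pm_n$ at distance $r$ from $(n,0)$ the leak of $\alpha_{Q|_A}(\CA)$ into $E^\pm_n$ is $\|\CA\|\Or$ fails for observables sitting near the lateral interface between $D^\pm_n$ and $E^\pm_n$ but far from $(n,0)$ --- there the leak is $O(1)$. The fix is that this interface lies deep in the bulk of $A$, where $\psi$, $\phi^\pm_n$ and $\psi_0$ are all mutually close, so a further subdivision of $D^\pm_n$ into a sub-cone hugging $\p A$ and a sub-cone hugging the interface, followed by another application of Lemma \ref{lma:almfacgluing}, closes the gap. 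The paper's own proof asserts the four-cone closeness at the same level of detail, so this is a shared, repairable imprecision rather than a defect specific to your approach.
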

\begin{proof}

Let $\tilde{\Psi}_n := \tilde{\psi}_n|_{A} \otimes \psi'|_{\bar{A}} = \alpha_{Q|_{\bar{A}}}^{-1} (\tilde{\psi}_n)$ and let $E_n = E^-_n \cup E^+_n$. First, note that $\alpha_{F|_{\bar{E}_n}} (\tilde{\Psi}_n)$ and $\alpha_{F|_{\bar{E}_n}} (\Psi)$ are $g$-close to $\psi_0$ on cones $E_n$, $D^-_n \cup D'^-_n$, $E'^-_n \cup E'^+_n$, $D^+_n \cup D'^+_n$, and therefore by Lemma \ref{lma:almfacgluing} are $h$-close to $\psi_0$ on $\RR^2$ for some MDP function $h(r)=\Or$ that depends only on $f(r)$. By Lemma \ref{lma:almfac} they are related by a conjugation with an almost local unitary, and therefore the same is true for states $\tilde{\psi}_n$ and $\psi$.

The state $\alpha_{Q|_{A}} (\tilde{\psi}_n)$ can be produced from $\psi_0 = \alpha_{Q|_{A}} (\psi)$ by an almost local unitary $\tilde{\CU}$ and coincides with $\psi_0$ on $\bar{A}$. Therefore this unitary can be chosen to be in $\SA_{A}$. Thus, $\tilde{\psi}_n$ can be obtained from $\psi$ by a conjugation with $\CU = \alpha_{Q|_A}^{-1}(\tilde{\CU}) \in \SA_{A}$.
\end{proof}

\begin{lemma} \label{lma:1dLGA}
For any MDP function $f(r)=\Or$ there is $L$ such that any ordered composition $\overrightarrow{\prod_{n=-\infty}^{\infty}} \alpha_{\CB^{(n)}}$ of LGPs generated by  $f$-local at $(n L,0)$ observables $\CB^{(n)}$ for $n \in \ZZ$ is an LGP.
\end{lemma}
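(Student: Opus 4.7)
The plan is to construct the generating 0-chain for the ordered product explicitly via the composition formula for LGPs, and then show convergence as the number of factors tends to infinity, provided $L$ is large enough relative to the decay rate $f$.

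First, I would work with the truncated products. Let $\alpha^{N} := \alpha_{\CB^{(-N)}} \circ \alpha_{\CB^{(-N+1)}} \circ \cdots \circ \alpha_{\CB^{(N)}}$. Iterating the identity $\alpha_{F} \circ \alpha_{G} = \alpha_{G + \alpha_{G}^{-1}(F)}$ used in Section 2.3, one obtains $\alpha^{N} = \alpha_{H^{N}}$ with
\beq
H^{N}(s) = \sum_{n=-N}^{N} \gamma^{(N,n)}(s)^{-1}\!\l \CB^{(n)}(s) \r, \qquad \gamma^{(N,n)} := \alpha_{\CB^{(n+1)}} \circ \cdots \circ \alpha_{\CB^{(N)}},
\eeq
with the convention $\gamma^{(N,N)}=\Id$. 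Each summand is a self-adjoint almost local observable obtained by conjugating $\CB^{(n)}(s)$ by a finite composition of LGAs generated by $f$-local Hamiltonians at points $(mL,0)$ with $m>n$.

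Next, I would establish a uniform almost-locality estimate. Each $\alpha_{\CB^{(m)}}$ is an LGA generated by an $f$-local observable at $(mL,0)$, so by the standard Lieb--Robinson-type stability of almost-locality under LGAs (see Lemma A.2 of \cite{kapustin2020hall}) the conjugation $\alpha_{\CB^{(m)}}^{-1}(\CB^{(n)})$ is almost local at $(nL,0)$ with decay $f_{m,n}$ determined by the distance $|m-n|L$ and the MDP function $f$. Iterating this estimate for the full composition $\gamma^{(N,n)}$, and choosing $L$ large enough that $\prod_{m>n}(1+\mathrm{small})$ converges with a uniform-in-$N$ limit, one obtains a single MDP function $f'$ (depending only on $f$ and the uniform bound on $\|\CB^{(n)}\|$) such that each summand in $H^{N}(s)$ is $f'$-local at $(nL,0)$ with norm uniformly bounded in $N$ and in $s$.

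Then I would pass to the limit. For a fixed site $j$, only the summands with $(nL,0)$ close to $j$ contribute significantly, because the $f'$-locality at $(nL,0)$ combined with $|j-(nL,0)|\to\infty$ makes the $j$-component of a summand go to zero superpolynomially in $|n|$. Consequently $H^{N}_{j}(s)$ is Cauchy as $N\to\infty$, the limit $H_{j}(s)$ is self-adjoint, uniformly bounded in $j$, and $f''$-local at $j$ for some MDP $f''$ that depends only on $f$. Thus $H \in \SC^{0}$, and it generates an LGP $\alpha_{H}$. Finally, by Duhamel/Gr\"onwall on $\SAal$, $\alpha_{H^{N}}(s)(\CA) \to \alpha_{H}(s)(\CA)$ for any $\CA \in \SAal$, and by construction $\alpha_{H^{N}}(1) = \alpha^{N}(1) \to \overrightarrow{\prod_{n\in\ZZ}}\alpha_{\CB^{(n)}}$, identifying the infinite ordered product with $\alpha_{H}$.

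The main obstacle is the uniform almost-locality bound on the summands $\gamma^{(N,n)}(s)^{-1}(\CB^{(n)}(s))$: I need the accumulated deterioration in locality coming from conjugations by all $\alpha_{\CB^{(m)}}$ with $m>n$ to be bounded by an MDP function independent of $N$. This is exactly what forces $L$ to be chosen large enough in terms of $f$, so that the per-step deterioration $f_{m,n}$ is summable over $m$, and the product of the associated multiplicative corrections is uniformly bounded.
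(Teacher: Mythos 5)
Your overall architecture matches the paper's: reduce to a one-sided product, use the LGP composition law to write the truncated product as $\alpha_{H^N}$ with $H^N$ a sum of conjugated generators, and observe that everything hinges on showing that each conjugated generator $\gamma^{(N,n)}(s)^{-1}(\CB^{(n)}(s))$ is almost local at $(nL,0)$ with a localization function depending on $f$ only, \emph{uniformly in $N$}. You correctly flag this as the main obstacle. The gap is that your proposed resolution --- iterate the single-conjugation locality estimate of Lemma A.2 of \cite{kapustin2020hall} and choose $L$ large so that ``$\prod_{m>n}(1+\mathrm{small})$ converges'' --- does not close. The deterioration caused by one conjugation is not a scalar factor multiplying a norm; it replaces the localization function by (roughly) a convolution of the old one with a tail of $f$ centered at $(mL,0)$. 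Iterating that estimate verbatim therefore produces an $N$-fold convolution whose superpolynomial decay is exactly what is in question: the naive bound degrades with $N$, which is precisely the conclusion you must avoid, and no choice of $L$ repairs this without an additional mechanism controlling the iterated convolution uniformly.

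The paper supplies that mechanism. Each unitary $\CU^{(n)}=e^{i\CB^{(n)}}$ is factored into strictly local unitaries $\CV^{(n)}_k$ supported on $B_n((k+\tfrac12)L)$ with $\|\CV^{(n)}_k-1\|\le h((k+\tfrac12)L)$; the product is then rearranged ``diagonally'' (using that conjugating a strictly local unitary by another one, supported inside the same set, preserves both the support and the bound $\|\cdot-1\|\le\eps$), so that the commutator expansion of $(\overleftarrow{\prod}\alpha_{\CB^{(n)}})(\CA)$ obeys the recursion $\|\CA^{(n)}_0\|\le 2\sum_{k=0}^{n-1}\|\CA^{(k)}_0\|\, g_{n-k}$. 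This recursion is then solved by upper-bounding $g$ by a reproducing MDP function $g'$ satisfying $\sum_{k=1}^{n-1} g'_k g'_{n-k}<(C/L^{\nu})g'_n$ and choosing $L$ so that $C/L^{\nu}<\tfrac12$, which yields the $N$-independent bound $\|\CA^{(n)}_0\|\le 2n g'_n\|\CA_0\|=\On$. This convolution-stability argument is where the hypothesis ``there is $L$'' actually does its work; your sketch needs some equivalent of it to become a proof.
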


\begin{proof}
First, note that it is enough to show this for $\overrightarrow{\prod_{n=0}^{\infty}} \alpha_{\CB^{(n)}}$. Second, to prove the latter it is enough to show that with an  appropriate choice of $L$ for any $f$-local at 0 observable $\CA$ the observable $(\overleftarrow{\prod_{n=1}^{N}} \alpha_{\CB^{(n)}})(\CA)$ is almost local at 0 with localization depending on $f$ only (in particular, independent of $N$), and that as $N \to \infty$ it converges in the norm to some element of $\SA$.

Let $B_n(r):=I_{n-r,n+r} \cup I'_{n-r,n+r}$. Let $\CU^{(n)} := e^{i \CB^{(n)}}$ be a unitary that corresponds to $\alpha_{\CB^{(n)}}$. It can be represented as a product $(\CV^{(n)}_0 \CV^{(n)}_1 \CV^{(n)}_2...)$ of strictly local unitaries $\CV^{(n)}_k$ on $B_n((k+\frac12)L)$, so that $\|\CV^{(n)}_k-1\| \leq h((k+\frac12)L)$ for some MDP function $h(r)=\Or$ that depends on $f(r)$ only. This is achieved by letting $(\CV^{(n)}_0 \CV^{(n)}_1 ... \CV^{(n)}_k) = e^{i\CB^{(n)}|_{B_n((k+\frac12)L)}}$.

Since conjugation of a strictly local observable $\CA$ with a unitary $\CU$ strictly local in the localization set  of $\CA$ does not change the property $\|\CA-1\|<\eps$ and preserves the localization set, we can rearrange unitaries $\CV^{(n)}_k$ in the product 
\beq \label{eq:firstpr}
(\CV^{(1)}_0 \CV^{(1)}_1 \CV^{(1)}_2...) (\CV^{(2)}_0 \CV^{(2)}_1 \CV^{(2)}_2...) ... (\CV^{(N)}_0 \CV^{(N)}_1 \CV^{(N)}_2...)
\eeq
in the following order:
\beq \label{eq:secondpr}
(\tilde{\CV}^{(1)}_0) (\tilde{\CV}^{(2)}_0 \tilde{\CV}^{(1)}_1) (\tilde{\CV}^{(3)}_0 \tilde{\CV}^{(2)}_1 \tilde{\CV}^{(1)}_2) ... (\tilde{\CV}^{(n)}_0 \tilde{\CV}^{(n-1)}_1 ... \tilde{\CV}^{(1)}_{n-1}) ...,
\eeq 
where $\tilde{\CV}^{(n)}_k$ is obtained from   $\CV^{(n)}_k$ by conjugation with $\CV^{(m)}_l$ with $m,l$ satisfying $n+1\leq m \leq n+k$ and $0\leq l\leq n+k-m$. Importantly,  $\tilde{\CV}^{(n)}_k$ is strictly local on the same interval as $\CV^{(n)}_k$ and still satisfies $\|\tilde{\CV}^{(n)}_k - 1\| \leq h((k+\frac12)L)$. Therefore after this rearrangement the infinite product eq. (\ref{eq:secondpr}) still converges to the same unitary observable as eq. (\ref{eq:firstpr}).

Let $\tilde{\CU}^{(n)} = \tilde{\CV}^{(n)}_0...\tilde{\CV}^{(1)}_{n-1}$. We can represent $\CA = \sum_{p=0}^{\infty} \CA_p$ with $\sum_{p=0}^{n} \CA_p = \CA|_{B_{0}(n+1/2)}$. Let $\CA^{(0)}_p:=\CA_p$, $\CA^{(n)}_p:=\sum_{k=0}^{n-1} \tilde{\CU}^{(n)*} [\CA^{(k)}_p, \tilde{\CU}^{(n)}]$. Note that $\CA^{(n)}_p \in \SA|_{B_0(p+\frac12)}$ for $n \leq p$, and $\CA^{(n)}_p \in \SA|_{B_0(n+\frac12)}$ for $n>p$. Therefore we have
\begin{multline} \label{eq:Aestimate}
\|\CA^{(n)}_0\| =  \sum_{k=0}^{n-1} \|[\CA^{(k)}_0,\tilde{\CU}^{(n)}]\| \leq \sum_{k=0}^{n-1} \sum_{l>\frac{n-k-1}{2}}^{n-1} \|[\CA^{(k)}_0,\tilde{\CV}^{(n-l)}_l]\| \leq \\ \leq \sum_{k=0}^{n-1} \sum_{l > \frac{n-k-1}{2}}^{\infty}  2 \|\CA^{(k)}_0\| h((l+\frac12)L) \leq 2 \sum_{k=0}^{n-1} \|\CA^{(k)}_0\| g_{n-k}
\end{multline}
where $g_n := g(n L/2)$ for $g(n) := \sum_{l \geq n}^{\infty} h(l)$.

Any MDP function $g(r)=\Or$ can be upper-bounded by a reproducing MDP function $\tilde{g}=\Or$ \cite{hastings2010quasi} for lattice $\Lambda \subset \RR^d$, i.e. an $\Or$ MDP function satisfying
\beq
\sup_{j,k \in \Lambda} \sum_{l \in \Lambda} \frac{\tilde{g}(|j-l|)\tilde{g}(|l-k|)}{\tilde{g}(|j-k|)} < \infty.
\eeq
We can further upper-bound $\tilde{g}(r)$ by a reproducing $\Or$ MDP function $g'(r)=A \tilde{g}(r)^{\alpha}/r^{\nu} = \Or$ for some constants $A$, $0<\alpha<1$ and $\nu > d$. Since $1/r^{\nu}$ is also reproducing, we have

\beq
A^2 \sum_{k=1}^{n-1} \frac{\tilde{g}(k L/2)^{\alpha} \tilde{g}((n-k) L/2)^{\alpha}}{(kL/2)^{\nu} ((n-k)L/2)^{\nu}} < \frac{C}{L^{\nu}} A \frac{\tilde{g}(n L/2)^{\alpha}}{(nL/2)^{\nu}},
\eeq
and therefore for $g'_n := g'(nL/2) \geq g_n$ we have $\sum_{k=1}^{n-1} g'_k g'_{n-k} < (C/L^{\nu}) g'_n$
for some constant $C$. By changing $L$ one can make $(C/L^{\nu})<1/2$. Therefore one can choose $L$ so that for $a_n = 2 n g'_n$ we have 
\begin{multline} 
2(g_n \cdot 1 + g_{n-1} a_{1}+ g_{n-2} a_{2} + ... + g_{1} a_{n-1}) \leq \\ \leq 2(g'_n \cdot 1 + 2 g'_{n-1} g'_{1} + 4 g'_{n-2} g'_{2} + ... + 2(n-1) g'_{1} g'_{n-1}) \leq 2 n g'_n = a_n.
\end{multline}
Together with eq. (\ref{eq:Aestimate}) this implies that $\|\CA^{(n)}_0\|/\|\CA_0\|$ can be upper-bounded by $a_n = \On$, and the sequence $\sum_{k=0}^{n}\CA^{(k)}_0$ converges in the norm to some element, which is almost local at 0. By construction the localization depends on $f$ only.

In the same way one can estimate the norms of $\CA^{(n+p)}_p$ for $n,p>0$ and bound $\|\CA^{(n+p)}_p\|/\|\CA_p\|$ by a sequence $a_n = \On$. Together with $\|\CA_p\| = \| \sum_{q=0}^{p}\CA^{(q)}_p\|$ that ensures convergence of $\sum_{p=0}^{\infty} \sum_{n=0}^{\infty} \CA^{(n)}_p$ to some almost local at (0,0) observable with localization depending on $f$ only.

\end{proof}

\begin{lemma} \label{lma:A3}
The states $\phi^{\pm}_n$ (or $\phi'^{\pm}_n$) are SRE defect states and are produced by an LGP $\alpha_{G}$ with $G_j \in \SA|_{C^{\pm}_n}$ (or $G_j \in \SA|_{C'^{\pm}_n}$).
\end{lemma}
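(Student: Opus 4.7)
The plan is to mirror the argument of Lemma~\ref{lma:A2}: construct an explicit candidate LGP with generator supported in $C^+_n$, show that its action on $\psi_0$ differs from $\phi^+_n$ only by an almost local unitary at $(n,0)$, and then (using that $\phi^+_n$ equals $\psi_0$ exactly on $\bar C^+_n$) refine that unitary so it lies in $\SA|_{C^+_n}$. The case of $\phi^-_n$, $\phi'^{\pm}_n$ is symmetric.

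First, I would apply Lemma~\ref{lma:decomp} to the 0-chain $Q|_A$ and the region $C^+_n \subset A$ to write
\beq
\alpha_{Q|_A} = \alpha_{Q^{(0)}} \circ \alpha_{Q|_{C^+_n}} \circ \alpha_{Q|_{A \setminus C^+_n}}, \qquad \alpha_{Q^{(0)}} \in \SG_{\p C^+_n},
\eeq
and take the candidate LGP to be $\alpha_{Q|_{C^+_n}}^{-1}$, whose generating 0-chain vanishes on $\bar C^+_n$. I would then analyze $\alpha_{Q|_{C^+_n}}(\phi^+_n)$ on the three cones separately. On $\bar C^+_n$ the state equals $\alpha_{Q|_{C^+_n}}(\psi_0)$ and is $h$-close to $\psi_0$ far from $(n,0)$ by decay of the generator into $\bar C^+_n$. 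On $D^+_n$ we have $\alpha_{Q|_{C^+_n}}(\phi^+_n) = \alpha_{Q|_{C^+_n}}(\psi)$, which by the decomposition above equals $\alpha_{Q^{(0)}}^{-1} \circ \alpha_{Q|_{A\setminus C^+_n}}^{-1}(\psi_0)$; this is close to $\psi_0$ far from both $\p C^+_n$ and $A \setminus C^+_n$, hence far from $(n,0)$ inside $D^+_n$. On the transition cone $E^+_n$ the input $\phi^+_n$ is already $h$-close to $\psi_0$ far from $(n,0)$ by construction, and $\alpha_{Q|_{C^+_n}}$ preserves this property modulo MDP decay. Iterated application of Lemma~\ref{lma:almfacgluing} to these three cones yields $h'$-closeness of $\alpha_{Q|_{C^+_n}}(\phi^+_n)$ to $\psi_0$ on all of $\RR^2$ far from $(n,0)$ for some MDP $h'(r)=\Or$.

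Next, Lemma~\ref{lma:almfac} produces an almost local self-adjoint observable $\CG$ at $(n,0)$ with $\Ad_{e^{i\CG}}(\psi_0) = \alpha_{Q|_{C^+_n}}(\phi^+_n)$; equivalently, $\phi^+_n = (\alpha_{Q|_{C^+_n}}^{-1} \circ \Ad_{e^{i\CG}})(\psi_0)$. Since $\phi^+_n = \psi_0$ on $\bar C^+_n$ exactly, the refinement step from the last paragraph of the proof of Lemma~\ref{lma:A2} — modifying $e^{i\CG}$ by a factorized piece on $\bar C^+_n$ that does not change its action modulo $\psi_0$ — allows $e^{i\CG}$ to be replaced by an almost local unitary $\CU \in \SA|_{C^+_n}$, which is the time-one map of an LGP $\alpha_H$ with $H_j \in \SA|_{C^+_n}$. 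The composition $\alpha_G := \alpha_{Q|_{C^+_n}}^{-1} \circ \alpha_H$ then has $G_j \in \SA|_{C^+_n}$ and satisfies $\alpha_G(\psi_0) = \phi^+_n$, proving that $\phi^+_n$ is an SRE defect state of the required form.

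The main obstacle will be the geometric bookkeeping on $E^+_n$: this is precisely the transition cone where $\phi^+_n$ is only approximately factorized \emph{and} where the truncated generator $Q|_{C^+_n}$ has its support closest to $(n,0)$, so one must check that the MDP decay of $\phi^+_n|_{E^+_n}-\psi_0|_{E^+_n}$ and of the $\alpha_{Q|_{C^+_n}}$-action compound in a controllable way before gluing to the other two cones via Lemma~\ref{lma:almfacgluing}.
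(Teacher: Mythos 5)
There is a genuine gap, and it sits exactly where you flagged the ``main obstacle'': the gluing step on $E^+_n$ fails, and not merely for bookkeeping reasons. Your argument needs $\alpha_{Q|_{C^+_n}}(\phi^+_n)$ to be close to $\psi_0$ far from $(n,0)$ on the transition cone $E^+_n$, and you justify this by saying that $\alpha_{Q|_{C^+_n}}$ ``preserves closeness to $\psi_0$ modulo MDP decay.'' It does not. Since only the full evolution satisfies $\alpha_Q(\psi_0)=\psi_0$, the truncated evolution $\alpha_{Q|_{C^+_n}}$ moves $\psi_0$ to a nontrivial invertible defect state supported along the \emph{entire} L-shaped boundary $\p C^+_n$, including the vertical ray $\{x=n,\,y\ge 0\}$ that forms the inner edge of $E^+_n$ and runs off to infinity. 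Near a point $(n,y)$ with $y$ large, $Q|_{C^+_n}$ looks like $Q$ restricted to the half-plane $\{x\ge n\}$, so $\alpha_{Q|_{C^+_n}}(\phi^+_n)\approx\alpha_{Q|_{C^+_n}}(\psi_0)$ there is the 1d defect state of Lemma \ref{lma:1ddefect}: LGA-equivalent to $\psi_0$ by an element of $\SG_{\p C^+_n}$, but not close to $\psi_0$ in the sense required by Lemma \ref{lma:almfacgluing}. Consequently $\alpha_{Q|_{C^+_n}}(\phi^+_n)$ carries a residual one-dimensional defect along the vertical ray, arbitrarily far from $(n,0)$, and Lemma \ref{lma:almfac} cannot remove it with an almost local unitary at $(n,0)$. (Relatedly, the correction $\alpha_{Q^{(0)}}$ you get from Lemma \ref{lma:decomp} lies in $\SG_{\p C^+_n}$, i.e. it is spread along the whole L-shaped boundary; you implicitly treat it as localized near the apex.) Removing that residual vertical defect is a problem of the same difficulty as the lemma you are trying to prove, so the one-shot construction is essentially circular.

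The paper avoids this by never trying to write down a single bulk generator for $\phi^+_n$. Instead it iterates the splitting of Lemma \ref{lma:A2}: defining truncations $\chi^{\pm}_{n,m}$ of $\phi^+_n$, it peels off a strip $I_{n+kL,\,n+(k+1)L}$ at each step by conjugation with an almost local unitary $\CU^{(k)}$ at $(n+kL,0)$, each strip state being $h$-close to $\psi_0$ away from its center and hence trivializable by Lemma \ref{lma:almfac}. The whole point of Lemma \ref{lma:1dLGA} --- which your proposal never uses --- is to choose $L$ so that the infinite ordered product $\overrightarrow{\prod_k}\Ad_{\CU^{(k)}}$ of these uniformly localized unitaries marching to infinity along the positive $x$-axis is itself an LGP. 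That iterative structure is the actual content of the lemma (it is the 2d-defect analogue of the ``invertible $\Rightarrow$ SRE'' argument of \cite{KSY}), and any correct proof has to contain something playing its role.
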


\begin{proof}
Let us define the truncations $\chi^{\pm}_{n,m}$ of $\phi^{+}_n$ in the same way as the truncations $\phi^{\pm}_m$ are defined for $\psi$. Note that for $n<m$ the state $\chi^{+}_{n,m}$ coincides with $\phi^{+}_m$. 

Lemma \ref{lma:A2} with a replacement of $\psi$ and $\psi'$ by $\phi^{+}_n$ and $\phi'^{+}_n$, respectively, implies, that for any $L>0$ the states $\phi^{+}_n$ and $\chi^{-}_{n,n+L}|_{B^{-}_{n+L}} \otimes \chi^{+}_{n,n+L}|_{B^{+}_{n+L}}$ are related by an almost local at $(n+L,0)$ unitary $\CU^{(1)} \in \SA_{A}$, with localization function depending only on $f$. Therefore by a consecutive application of conjugations with such unitaries $\CU^{(k)}$ at points $(n+kL,0)$ one can produce a state, which is a tensor product of $\psi_0|_{\bar{A}}$ and states on $I_{n+k L,n+(k+1)L}$, each of which by Lemma \ref{lma:almfacgluing} is $h$-close to $\psi_0$ far from $n+(k+\frac12)L$ for some MDP function $h(r)=\Or$ that depends only on $f$ and $L$. Since localization function of all unitaries $\CU^{(n)}$ is the same and depends only on $f$, by Lemma \ref{lma:1dLGA} one can find $L$, such that an ordered product of conjugations with such unitaries is an LGA. The resulting state is LGA equivalent to $\psi_0$ by Lemma \ref{lma:almfac}.

Similarly, one can show that $\phi^{-}_n$ and $\phi'^{\pm}_n$ are SRE defect states.
\end{proof}

Lemma \ref{lma:A3} implies that $\tilde{\psi}_n$ is an SRE defect state. Since by Lemma \ref{lma:A2} it is related to $\psi$ by an almost local unitary at $(n,0)$, we conclude that $\psi$ is an SRE defect state.

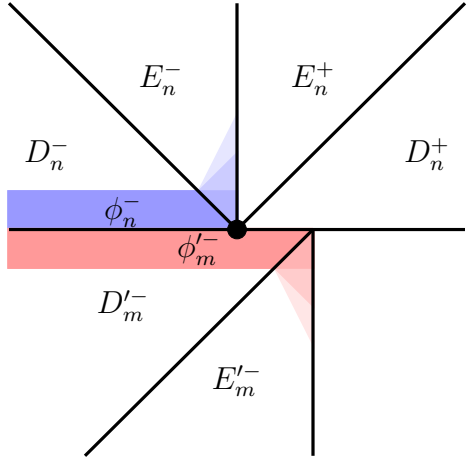
\begin{figure}
\centering
\begin{tikzpicture}[scale=.5]

\draw [color=blue!10, fill=blue!10, very thick] (0,0) to (-6,0) -- (-6,1) -- (-1,1) -- (0,3) -- (0,0);
\draw [color=blue!20, fill=blue!20, very thick] (0,0) to (-6,0) -- (-6,1) -- (-1,1) -- (0,2) -- (0,0);
\draw [color=blue!30, fill=blue!30, very thick] (0,0) to (-6,0) -- (-6,1) -- (-1,1) -- (0,1) -- (0,0);
\draw [color=blue!40, fill=blue!40, very thick] (0,0) to (-6,0) -- (-6,1) -- (-1,1) -- (0,0);

\draw [color=red!10, fill=red!10, very thick] (2,0) to (-6,0) -- (-6,-1) -- (-1+2,-1) -- (0+2,-3) -- (0+2,0);
\draw [color=red!20, fill=red!20, very thick] (0+2,0) to (-6,0) -- (-6,-1) -- (-1+2,-1) -- (0+2,-2) -- (0+2,0);
\draw [color=red!30, fill=red!30, very thick] (0+2,0) to (-6,0) -- (-6,-1) -- (-1+2,-1) -- (0+2,-1) -- (0+2,0);
\draw [color=red!40, fill=red!40, very thick] (0+2,0) to (-6,0) -- (-6,-1) -- (-1+2,-1) -- (0+2,0);

% \draw[black, very thick] (0,0) -- (0,-3);
\draw[black, very thick] (0,0) -- (0,6);
\draw[black, very thick] (0,0) -- (-6,6);
\draw[black, very thick] (0,0) -- (6,6);
\draw[black, very thick] (0,0) -- (6,0);
\draw[black, very thick] (0,0) -- (-6,0);
\draw[black, very thick] (2,0) -- (2,-6);
\draw[black, very thick] (2,0) -- (2-6,-6);

\filldraw[black] (0,0) circle (7pt) node[anchor=west]{};

\node  at (-5,2) {$D^{-}_n$};

\node  at (-2,4) {$E^{-}_n$}; 

\node  at (-3,.5) {$\phi^{-}_n$};

\node  at (5,2) {$D^{+}_n$};

\node  at (2,4) {$E^{+}_n$}; 

% \node  at (3,.5) {$\phi^{+}_n$};

\node  at (-5+2,-2) {$D'^{-}_{m}$};

\node  at (-2+2,-4) {$E'^{-}_{m}$}; 

\node  at (-3+2,-.5) {$\phi'^{-}_m$};

% \node  at (-.5,-.5) {$p$}; 
% \node  at (2,1.5) {$A$};
% \node  at (0,4.5) {$(\p A)_-$};
% \node  at (1.3*3.4641,-1.3*2) {$(\p A)_+$};

\end{tikzpicture}
\caption{A truncation $\phi^{-}_n$ of the state $\psi$ to $C^{-}_n$, that coincides with $\psi$ inside the cone $D^-_n$ and is almost factorized state inside the cone $E^{-}_n$. The cones $D^{-}_n$ and $E^{-}_n$ split the cone $C^{-}_n$ into two.
}
\label{fig:fig1}
\end{figure}

\printbibliography

\end{document}